\newcommand\R{\mathbb{R}}
\newcommand{\floor}[1]{\left\lfloor#1\right\rfloor}
\newcommand{\lp}{\left(}
\newcommand{\rp}{\right)}
\newif\ifRapportRecherche
\def\R{{\mathbb R}}
\title[Incentives and Redistribution in Bike-Sharing Systems]{Incentives and Redistribution in Homogeneous Bike-Sharing Systems with
  Stations of Finite Capacity}
\author{Christine Fricker \and Nicolas Gast}
\thanks{This work is partially supported by the EU project QUANTICOL, 600708.}
\thanks{C. Fricker is with INRIA Paris-Rocquencourt Domaine de
  Voluceau, 78153 Le Chesnay, France.
  \url{christine.fricker@inria.fr}%
  \and %
  Nicolas Gast is with IC-LCA2, EPFL, Lausanne,
  Switzerland. and Inria, Grenoble, France. \url{nicolas.gast@inria.fr} }
\date{\today}
\newtheorem{theorem}{Theorem}
\begin{document}

\maketitle

\begin{abstract}
  Bike-sharing systems are becoming important for urban transportation. In such systems, users arrive at a station, take a bike and use it for a while, then return it to another station of their choice. Each station has a finite capacity: it cannot host more bikes than its capacity.  We propose a stochastic model of an homogeneous bike-sharing system and study the effect of users random choices on the number of problematic stations, \emph{i.e.}, stations that, at a given time, have no bikes available or no available spots for bikes to be returned to. We quantify the influence of the station capacities, and  we compute the fleet size that is optimal in terms of minimizing the proportion of problematic stations. Even in a homogeneous city, the system exhibits a poor performance: the  minimal proportion of problematic stations is of the order of (but not lower than) the inverse of the capacity.  We show that simple incentives, such as suggesting users to return to the least loaded station among two  stations, improve the situation by an exponential factor.  We also compute the rate at which bikes have to be redistributed by trucks to insure a given quality of service. This rate is of the order of the inverse of the station capacity. For all cases considered, the fleet size that corresponds to the best performance is half of the total number of spots plus a few more, the value of the \emph{few more} can be computed in closed-form as a function of the system parameters. It corresponds to the average number of bikes in circulation.
\end{abstract}

\keywords{Bike-sharing systems; stochastic model; incentives; redistribution mechanisms; mean-field approximation}

\section{Introduction}
Bike-sharing systems (BSS) are becoming important for urban transportation. They are devoted to short trips.  A few BSS have been launched
since Copenhagen launched its program in 1995.  BSS were widely deployed in
the 2000s after Paris launched the large-scale program called Velib,
in July 2007.  Velib consists of 20\,000 available bikes and 1500
stations. Currently, there are more than $400$ cities equipped with BSS
around the world (see \cite{demaio2009bike} for a history of BSS).
The popularity of BSS gives rise to a recent research
activity. 

The concept of BSS is simple: A user arrives at a station, takes a
bike, uses it for a while and then returns it to another station. The
lack of resources is one of the major issues: a user can arrive at a
station that has no bike available, or wants to return her bike at a
station with no empty spot.  The allocation of resources, bikes and
empty places, has to be managed by the operator in order to offer a
reliable alternative to other transportation modes.

The strength of such a system is its ability to meet the demand, in
bikes and empty spots. This demand is complex. It depends on the time of the
day, the day of the week (week or week-end), the season and the
weather, but also the location: housing or working areas generate
going-and-coming flows; flows are also generated from up-hill to
down-hill stations. This creates unbalanced traffic during the
day. Moreover, the system is stochastic due to the arrivals at the
stations, the origin-destination pairs and the trip lengths. The lack
of resources also generates random choices from the users, who must
search for another station.  These facts are supported by several data
analyses, \emph{e.g.}, \cite{Borgnat-1,Froehlich-1,Nair-1}.

When building a bike-sharing system, a first strategic decision is the
planning of the number of stations, their locations and their
size. Other long-term operation decisions involve static pricing and
fixing the number of bikes in the system.  Several papers have studied
these issues. See
\cite{katzev2003car,shaheen2007growth,demaio2004will,demaio2009bike,Nair-2}. They
study the static planning problem based on economical aspects and
growth trends.  Another important research direction concerns bike
repositioning: To solve the problem of unbalanced traffic, bikes can
be moved by the operator, either during the night when the traffic is
low (static repositioning) or during the day (dynamic repositioning).
\cite{Raviv-1} study the optimal placement of bikes at the beginning
of a day.  \cite{Chemla-1} develop an algorithm that minimizes the
distance traveled by trucks to achieve a given bike positions,
assuming that bikes do not move (\emph{e.g.}, during the night). The
dynamic aspects are studied by \cite{Contardo-1,Hampshire-1,Raviv-2},
but their model ignores bike moves between periodical updates of the
system. 
These papers rely mainly on optimization techniques.

Redistribution can also be done by users. In most BSS, users have access to the real-time state of the stations,
\emph{e.g.}, by using a smartphone. They can choose to take or to
return their bikes to a station near their destinations.  Moreover
they can be encouraged to do so by the system. For example, Paris, via
the Velib+ system, offers a static reward by giving free time slots in
order to bring more bikes to up-hill stations. It is estimated that
the number of people that obtain rewards via Velib+ during the day is
equivalent to the number of bikes redistributed by trucks. To
compensate for real-time congestion problems, an alternative is to use
real-time pricing mechanisms. This type of congestion control
mechanism is widely applied in the transportation or car-rental
industry, \emph{e.g.},
\cite{guerriero2012revenue,waserhole2012vehicle}. Nevertheless, its
application to BSS is unclear, as the price paid per trip for using
BSS is usually low.

A few papers tackle the stochasticity of BSS, or more generally of
vehicle rental systems (see \cite{Godfrey-1,George-2} and literature
therein). Their first idea was to obtain a simplified asymptotic
behavior when the system gets large. This approximation is valid as
BSS are large systems and can give qualitative and quantitative
properties for the model.  In models with product-form steady-state
distribution (see \cite{George-1,George-2}), the asymptotic expansion
of the partition function can be obtained via complex analysis (saddle
point method), see \cite{Malyshev-6}, or probabilistic tools, see
\cite{Fayolle-7}.  One of the main limitations of these papers is that
they ignore that stations have finite capacities and therefore neglect
the saturation effect.  \smallskip

\textbf{Contributions} -- We present a model of a bike-sharing system
and analyze its steady-state performance. The system is composed of a
large number of stations and a fleet of bikes. Each station can host a
finite number of bikes $K$, called its \emph{capacity}.  We measure
the performance in terms of proportion of so-called {\em problematic
  stations}, \emph{i.e.}, stations with no available bikes or no empty
spots.

We study an homogeneous scenario, in which all stations have the same
parameters. Our aim is to obtain a simple tractable model. Intuitively, this system in
which the flow of bikes between two stations is, on average, identical
in both directions, has roughly speaking the best behavior. For more details, see our work on inhomogeneous
BSS, \cite{velib-aofa}. We investigate the effects of 
random choices of users and characterize the influence of the
station capacity on the performance.
We compare incentives and redistribution mechanisms, and we obtain
closed-form characteristics of their performance. This model can be
straightforwardly extended to model non-homogeneous cities in order to
take into account the difference of attractiveness among the stations.
The paper by \cite{velib-aofa} provides analytical results in the case
where the stations can be grouped into clusters of stations that have
the same characteristics. These results will be briefly presented and
discussed in this paper. The redistribution and incentive mechanisms
in an heterogeneous setting are not studied by \cite{velib-aofa} but
will be included in their upcoming paper,
\cite{velib-2choices-2clusters}.

This homogeneous model also forgets geometry, present in real-world systems 
where the bikes are returned to neighboring stations in case of lack of available spots.
But the homogeneous model study is nevertheless useful: simulations show that the behavior of both systems are very similar.
 Thus the qualitative and quantitative results obtained for the homogeneous model apply to a system with a local search of an empty spot.


Our first contribution is to study the simplest model without any
incentives or redistribution mechanisms. We use a mean-field
approximation that enables us to obtain the asymptotic behavior of our
model as the system size becomes large. 
This asymptotic dynamics leads to simple expressions that give
qualitative and quantitative results. This method works even if a
closed-form (product-form) expression is not available for the
original model.  We show that the proportion of problematic stations
depends on the fleet size and it decreases slowly with the capacity
$K$. For a given capacity, it presents a minimum that is attained at a
fleet size called {\em optimally reliable fleet size}, which is equal
to $K/2+\lambda/\mu$ bikes per station, where $\lambda$ is the arrival
rate of users at a station and $1/\mu$ is the average trip time. This
answers the fleet sizing problem. The term $\lambda/\mu$ represents
the average number of bikes in circulation. It quantifies the 
intuitive idea that the greater the demand is, the more bikes must be
put in the system.  For this fleet size, the proportion of problematic
stations is $2/(K+1)$.

To improve the situation, we investigate two different directions:
incentives and redistribution. We first assume that users have access
to real-time information on the system and follow the rules the system
gives about where to take or return the bike. The improvement that is
obtained in this case is quantified.  We show that returning bikes to
a non-saturated station does not change significantly the behavior of
the stations and the performance with our metric. The situation
improves dramatically when users return their bikes to the least
loaded station among two, even if only a fraction of the users do
this. Indeed we prove that, if all users do this, the proportion of
problematic stations can be as low as $\sqrt{K}2^{-K/2}$. These
results are confirmed by simulations in which users choose among two
neighboring stations. Again, the optimally reliable fleet size is a
little more than $K/2$.

In Section~\ref{sec:regulation}, we then study what we call the
redistribution rate . We define the redistribution rate as the ratio
of the number of bikes that have to be moved manually by trucks over
the number of bikes that are taken by users. It is proved that the
redistribution rate that optimizes performance depends on the fleet
size and the station capacity. More precisely, the redistribution rate
threshold needed to suppress problematic stations is minimal when the
fleet size is $K/2+\lambda/\mu$ bikes per station and is equal to
$1/(K-1)$.

Finally, in Section~\ref{sec:validation}, we discuss the limitations
of the model. We describe briefly the differences that will occur when
considering a time- or space-inhomogeneous model. We mainly refer to
the paper by \cite{velib-aofa}, whose main result is an extension of
the expression of the minimal proportion of problematic stations
within a cluster at some optimally reliable fleet size, which
generalizes $s=K/2+\lambda/\mu$. We also show simulation results of
more realistic models that consider various trip-time distributions
and take the geometry into account.  In all cases, these models
behave similarly to the original model. We also simulate a two-choice
model where the choice is made at the beginning of the trip, when
taking the bike.  We introduce then a different performance indicator,
the average number of stations visited before returning a bike. We show
that, except when the system's geometry is a line, this indicator can
be deduced from the proportion of saturated stations. Moreover we show
that our model can be extended, while remaining analytically
tractable. We detail the case of mean search times that are shorter
than the mean travel time, and the case of losses of users arriving at
an empty station are replaced by a search of an available bike in
another station. These extensions have also been investigated by
simulation on models with geometry. Their behaviors are still very
similar.

\smallskip

\textbf{Organization of the Paper} -- Section~\ref{sec:model} presents
the model description and the mean-field techniques.
Section~\ref{sec:symmetric} deals with the basic model
results. Section~\ref{sec:incentives} studies incentive mechanisms
where a fraction of users choose the least loaded of two stations to
return their bike. Section~\ref{sec:regulation} shows that there
exists a threshold for the rate of redistribution by trucks which
optimizes performance.  Section~\ref{sec:validation} deals with
discussions, extensions and simulation
validations. Section~\ref{sec:conclusion} concludes.

\section{System Model and Mean-Field Analysis}
\label{sec:model}

\renewcommand\chi{\{0, \ldots, K\}}
In this section, we present the basic model.  Mean-field techniques
are used to investigate the performance of homogeneous BSS. These
techniques reduce the study of the stochastic model to the study of
the equilibrium point of a set of differential equations. We detail
the steps to obtain this result.  The other scenarios studied in this
paper fit the same framework.

\subsection{Main Notation List}~

\begin{tabular}{@{}p{1cm}l@{}}
  $N$       & Number of stations.\\
  $s$       & Average number of bikes per station (the total number of bikes is $sN$).\\
  $K$       & Number of slots in a station, also called capacity of the station.\\
  $\lambda$ & Arrival rate of users at a station.\\
  $1/\mu$   & Average trip time.\\
  $Y^N_k(t)$& Proportion of the $N$ stations  where $k$ bikes are available at time $t$.\\
  $y_k(t)$  & Limit of $Y^N_k(t)$ as $N$ tends to infinity (described by an ODE).\\
\end{tabular}

\begin{tabular}{@{}p{1cm}l@{}}
  $\bar{y}$ & Equilibrium point of the corresponding ODE.\\
  $U^N_k(t)$ & Proportion of the $N$ stations where $k$ or more bikes are available at $t$.\\
  $u_k(t)$    & Limit of $U^N_k(t)$ as $N$ tends to infinity (described by an ODE).\\
  $\bar{u}$ & Equilibrium point of the corresponding ODE.\\
\end{tabular}

\subsection{Homogeneous Bike-Sharing Model}\label{sec:hm}

We consider a Markovian model of a bike-sharing system with $N$
stations and a fleet of $\floor{sN}$ bikes ($s$ bikes per station in
average). A bike can be either hosted at a station or in transit
between two stations.  In this paper, we focus on the homogeneous
bike-sharing model. It allows us to obtain a closed-form expression for
the optimal performance and also, in the next sections, to investigate
incentives and redistribution by trucks in this framework, and to
quantify their effects. This study is extended to an
inhomogeneous model by \cite{velib-aofa}.

Each station can host $K$ bikes.  At each station, new users arrive at
rate $\lambda$. If there is no bike at this station, the unhappy user
leaves the system. If the station is not empty, the user takes a bike
at this station and joins the pool of riding users.  The trip time
between the two stations is exponentially distributed with mean
$1/\mu$.

After this time, the riding user wants to return her bike. She chooses
a destination at random among all stations.  If her destination has
fewer than $K$ bikes, the user returns her bike to this station and
leaves the system. If the station has $K$ bikes, no more bikes can be
returned at this station and this station is called
\emph{saturated}. In this case, the user rides to another station.
This station is again chosen at random and the trip time is
exponentially distributed with mean $1/\mu$. This process is repeated
until she finds a non-saturated station.

As seen in the rest of the paper, our model can be thoroughly analyzed
and leads to closed-form results. This model, however, does not
incorporate any geographical information.  In a real-world system,
users who cannot find a bike or cannot return their bike will try a
neighboring station and not just one at random. These modifications
unfortunately lead to intractable models. Nevertheless, we show by
using simulations in Section~\ref{sec:validation} that taking into
account locality has little effect on the overall performance.

\subsection{Mean-Field Limit and Steady-State Behavior}
\label{sec:mf}

In this section, we prove that the analysis of the system is
essentially the analysis of an ordinary differential equation (ODE)
as $N$ goes to infinity and that the equilibrium point can be
addressed.

Let us denote by $Y^N_k(t)$ the proportion of stations where $k$
bikes are available at time $t$.  As the system is homogeneous, the process
$(Y^N(t))=(Y^N_0(t)\dots Y^N_K(t))$ is a Markov process. Suppose the process is at $(y_0\dots y_K)$. There are two types of transitions:
\begin{itemize}
\item \textbf{Bikes taken.} The arrival rate of users in a station
  that has $k$ bikes is $\lambda Ny_k$. When $k\ge1$, it causes the $k$th
  coordinate, $y_k$, to decrease by $1/N$ and $y_{k-1}$ to increase by
  $1/N$.
\item \textbf{Bikes returned.} The number of bikes in transit is equal
  to the total fleet size $sN$ minus the number of bikes hosted at the
  stations, which is equal to $N(s-\sum_{k=1}^Nky_k)$. As trip times
  are exponentially distributed with mean $1/\mu$ and stations are
  chosen at random, a user arrives with a bike at a station with $k$
  bikes at rate $y_k\mu N(s-\sum_{k=1}^K ky_k)$. When $k\le K-1$, this
  causes $y_k$ to decrease by $1/N$ and $y_{k+1}$ to increase by
  $1/N$.
\end{itemize}
The transitions can be summarized by
\begin{align*}
  y & \rightarrow y+\frac{1}{N}(\mathbf{e}_{k-1}-\mathbf{e}_k) & \mathrm{at~rate~}\quad&\lambda y_k
 N\mathbf{1}_{k>0},\\ 
 y & \rightarrow y+\frac{1}{N}(\mathbf{e}_{k+1}-\mathbf{e}_k) & \mathrm{at~rate~}\quad& y_k \mu
 (sN-\sum_{n=0}^K n y_n N)\mathbf{1}_{k<K}
\end{align*}
where the $k$-th unit vector of $\R^{K+1}$ is denoted by
$\mathbf{e}_k$ and $\mathbf{1}_{k<K}$ is equal to 1 when $k<K$ and $0$
otherwise.

This process belongs to the family of \emph{density-dependent
  population processes}, defined by \cite{kurtz1981}. This means that
there exist a set of vector $\mathcal{L}\subset\R^{K+1}$ and a set of
functions $\{\beta_\ell\}_{\ell\in\mathcal{L}}$ such that the
transitions are of the form $y\to y+\ell/N$ and occur at rate
$N\beta_\ell(y)$. This implies that, from $y$, the average change in a
small interval $dt$ is
$f(y)dt=\sum_{\ell\in\mathcal{L}}\ell\beta_\ell(y)dt$.  As $f$ is
Lipschitz-continuous, it is shown in \cite{kurtz1981} that as $N$ goes
to infinity, for each $T>0$, the process $(Y^N(t),0\leq t\leq T)$
converges in distribution to a deterministic function $(y(t),0\leq
t\leq T)$, which is the unique solution of the following differential
system of equations:
 \begin{align}\label{eq:ODE}
   \dot{y}(t) = \sum_{k=0}^K y_k(t) \left( \lambda (\mathbf{e}_{k-1}-\mathbf{e}_k)1_{k>0}+\mu
     (s-\sum_{k=1}^K ky_k(t)) (\mathbf{e}_{k+1}-\mathbf{e}_k)1_{k<K} \right)
\end{align}
where $\mathbf{e}_k$ is the $k$-th unit vector of $\R^{K+1}$.

The first term corresponds to the rate of arrival of new users at a
station, and the second term corresponds to the rate at which users
return bikes, which is $\mu$ times the proportion of bikes in transit
at time $t$.

The above differential equation rewrites $\dot{y}(t)= y(t)L_{y(t)}$
where the $y(t)L_{y(t)}$ is the product of the row vector $y(t)$, by
the jump matrix $L_{y(t)}$. 
This equation contains the mean-field
property of the model: this means that, when $N$ tends to infinity,
the empirical distribution $y(t)$ of the stations evolves in time as
the distribution of some non-homogeneous Markov process on
$\{0,\ldots,K\}$, whose jumps are given by $L_{y(t)}$, updated by the
current distribution $y(t)$. These jumps rates are those of an
$M/M/1/K$ queue, where the arrival rate $\mu (s-\sum_{k=1}^Kky_k(t))$
is time dependent and the service rate is $\lambda$. This queue
represents the instantaneous evolution of any station, because all the
stations have the same evolution due to the homogeneity.

Throughout the paper, we investigate the steady-state behavior of the
system. For all variants of the model studied in this paper, the
dynamical system has a unique equilibrium point.  Note that this fact
alone does not imply that the sequence of invariant measures of $Y^N$
concentrates on this fixed point: it is necessary to show that the
dynamical system does not have long-term oscillations and, in general,
the proof of this is difficult. There are different techniques for
obtaining this result.  In Section~\ref{sec:symmetric}, we show the
absence of oscillations by using a generic Lyapunov function.
Although numerical evidences show that this is also the case for the
other models, the proof is out of the scope of the paper and the
question is not addressed for models with incentives or
redistribution.

\subsection{Performance Metric and Proportion of Problematic Stations}

In this paper, we mainly focus on a quality of service indicator,
called the \emph{proportion of problematic stations}. This proportion
is the proportion of stations where either no bikes are available or
that are saturated.  When the number of stations $N$ goes to infinity,
this proportion converges to $\bar{y}_0+\bar{y}_K$, where $\bar{y}$ is
the unique fixed point of the differential equation.

This metric generalizes the loss probability, used for example in the
context of vehicle rental networks by \cite{George-1}, where the
station capacities are infinite. Moreover, a user will be satisfied if
she can take a bike at her chosen place of departure and return it at
her chosen destination. In an homogeneous system like ours and if
origin and destination are chosen uniformly at random, this occurs
with probability
$1-(1-\bar{y}_0)(1-\bar{y}_K)=\bar{y}_0+\bar{y}_K+\bar{y}_0\bar{y}_K\approx
\bar{y}_0+\bar{y}_K$. Hence, our metric is close to the {\em limiting
  proportion of unsatisfied users}, who cannot enter the system or
return their bike in the station of their choice. It measures the
quality of service in \cite{waserhole2012vehicle} and \cite{Nair-2}.
The average sojourn time in the system can also be deduced from
$\bar{y}_0$ and $\bar{y}_K$.  When a user wants to return a bike at a

saturated station, she has to find a non-saturated station. When this search is
done at random, the average number of stations that are visited before
finding a spot is $1/(1-\bar{y}_K)$ (see details in Section~\ref{sec:other_perf}).  In Section~\ref{sec:geometry},
we show that, for a realistic model of geometry (2D grid), the average
number of visited stations is close to $1/(1-\bar{y}_K)$.

This
justifies the choice of this simple metric, though  others can be
deduced from $\bar{y}_0$ and $\bar{y}_K$.
Indeed, the sum $\bar{y}_0+\bar{y}_K$ hides the relative value of each term,
which can be useful to know.  Nevertheless, this has the advantage of providing a
single indicator of the performance. Our goal in this paper is to
obtain bounds on this performance criteria. The proportion of
problematic stations is not a cost function.  In this paper, we do not
question the cost or the practical methods for implementing our
mechanisms and leave this question for future work. Hence, throughout
the paper, the term \emph{optimal} performance will be understood in
term of minimizing this quality of service. It occurs for a value of
the fleet size called optimally reliable fleet size.


\section{Basic Model and Optimal Fleet Size}
\label{sec:symmetric}
\renewcommand\chi{\{0, \ldots, K\}}

This section is devoted to the basic model. As seen in
Section~\ref{sec:mf}, the behavior of the system can be approximated
by the ODE~\eqref{eq:ODE} when the system becomes large.  This allows
for a complete study of the performance metric as a function of
$s$. We derive the optimal reliable fleet ratio and discuss the
influence of the parameters $\lambda/\mu$ and $K$.

\subsection{Basic  Model: Steady-State Analysis}
\label{sec:basic-model}

An equilibrium point $\bar{y}$ of the ODE~\eqref{eq:ODE} is the stationary
measure of an $M/M/1/K$ queue with arrival rate
$\mu(s-\sum_{k=1}^Kk\bar{y}_k)$ and service rate $\lambda$.  For $\rho\ge0$,
let $\nu_{\rho}$ be the invariant probability measure\footnote{For
  $\rho=1$, $\nu_{\rho}$ is  the uniform distribution on $\chi$. For
  $\rho\ne1$, $\nu_{\rho}$ is geometric: $\nu_{\rho}(k)=\rho^k
  (Z(\rho))^{-1}$ where $Z(\rho)=(1-\rho^{K+1})/(1-\rho)$ is the
  normalizing constant.} of a $M/M/1/K$ queue with arrival-to-service
rate ratio $\rho$ and define $\rho(\bar{y})=\mu(s-\sum_{k\in\chi} k
\bar{y}_k)/\lambda$. The equilibrium points of~\eqref{eq:ODE} are thus the
solutions of the fixed point equation
$\bar{y}=\nu_{\rho(\bar{y})}$.

We now prove the existence and uniqueness of the equilibrium
point. For each $\rho$, there exists a unique stationary measure
$\nu_\rho$. Hence, this equation is equivalent to $\bar{y}=\nu_\rho$
where $\rho$ is the solution of
\begin{align}\label{fixep}
  s=\frac{\lambda}{\mu} \rho + \sum_{k=1}^Kk\nu_{\rho}(k).
\end{align}
This equation can be easily explained. The proportion of bikes per
station $s$ is the sum of two terms: The mean number of users still
riding, $\rho\lambda/\mu$, and the mean number of bikes per station, $
\sum_{k=1}^Kk\nu_{\rho}(k)$. The expression of first term can be
computed using the arrival rate $\lambda$ and the probability of
finding an available bike at a station, $1-\nu_{\rho}(0)$.  Returning
a bike at the $k$-th attempt takes an average time $k/\mu$ and occurs
with probability $(1-\nu_{\rho}(K)) \nu_{\rho}(K)^{k-1}$.  Thus, the
first term is equal to $ \lambda (1-\nu_{\rho}(0))\sum_{k=1}^{+\infty}
(1-\nu_{\rho}(K)) \nu_{\rho}(K)^{k-1} k/\mu $, which reduces, after
some computation, to $\rho \lambda/\mu$.

The right part of Equation~\eqref{fixep} strictly increases in
$\rho$. Therefore, for each $s>0$, there is a unique $\rho$ solution
of equation~\eqref{fixep} and thus a unique equilibrium point
$\nu_{\rho}$ is denoted by $\bar{y}(\rho)$ in the following.

In \cite{velib-aofa}, we exhibit a Lyapunov function for this dynamics
in a more general setting. It shows that all trajectories of the ODE
converge to the unique fixed point. As a consequence, the steady-state
empirical distribution of the system concentrates on this unique fixed
point. This means that the limiting stationary distribution of the
number of bikes at a station is this fixed point, \emph{i.e.}, a
geometric distribution on $\{0,\ldots,K\}$ where the parameter $\rho$
is the solution of Equation~\eqref{fixep}.

\subsection{Proportion of Problematic Stations}
\label{sec:propProblematicStations}

The next theorem shows the effect of the number of bikes per station
$s$ on the performance of the system. Let $\bar{y}$ be the equilibrium
point of the equation.  The fixed point of Equation~\eqref{eq:ODE}
can be rewritten as a polynomial equation in $\rho$ of order
$K+1$. Hence, even if solving this equation is possible for very small
values of $K$, finding a closed-from expression for $K\ge4$ is
unfeasible.  Nevertheless, Equation~\eqref{fixep} provides an
efficient way to achieve a performance study of the system by
considering the parametric curve
\begin{align}
  \lp\rho\frac{\lambda }{\mu} + \sum_{k=1}^K k\rho^k\bar{y}_0(\rho), ~~
  \bar{y}_0(\rho)+\bar{y}_K(\rho)\rp_{\{\rho>0\}}
  \label{eq:parametric}
\end{align}
where, and in the following, $\bar{y}(\rho)$ is equal to $\nu_{\rho}$.
The use of this parametric curve allows us to study efficiently the
performance of the system as a function of the number of bikes per
station $s$. These results are summarized in the next theorem and in
Figure~\ref{fig:performance_onecluster}.

\begin{theorem} 
  \label{th:onecluster}
  \label{th:one-choice}
For the homogeneous model, 
  \begin{itemize}
  \item[(i)] the limiting proportion of problematic stations, $\bar{y}_0+\bar{y}_K$, is minimal
    when $s=K/2+\lambda/\mu$ and the minimum is equal to $2/(K+1)$.  It
    goes to one when $s$ goes to zero or infinity.
    
  \item[(ii)] As $K$ grows, the performance around $s=K/2+\lambda/\mu$
    becomes flatter and insensitive to $s$ and $\lambda/\mu$.
  \end{itemize}
\end{theorem}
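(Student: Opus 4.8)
The plan is to exploit the explicit geometric form of the equilibrium $\bar y=\nu_\rho$ and reduce everything to a one-dimensional analysis in the parameter $\rho$. Since the right-hand side of the fixed point equation~\eqref{fixep} is a strictly increasing bijection from $\rho\in(0,\infty)$ onto $s\in(0,\infty)$, minimizing the proportion of problematic stations over $s$ is the same as minimizing it over $\rho$. Using $\nu_\rho(k)=\rho^k/Z(\rho)$ with $Z(\rho)=\sum_{k=0}^K\rho^k$, I would first record the closed form
\begin{align*}
  \bar y_0+\bar y_K = \nu_\rho(0)+\nu_\rho(K) = \frac{1+\rho^K}{\sum_{k=0}^K\rho^k}=:P(\rho),
\end{align*}
valid for $\rho\neq1$ and extended by continuity (the uniform case) at $\rho=1$, where $P(1)=2/(K+1)$.

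The key structural observation is the reflection symmetry $\nu_{1/\rho}(k)=\nu_\rho(K-k)$, which gives $P(1/\rho)=P(\rho)$ and already singles out $\rho=1$ as the natural candidate minimizer. To turn this into a genuine bound I would write $1/P(\rho)=1+\frac{\sum_{k=1}^{K-1}\rho^k}{1+\rho^K}$ and pair the $k$-th and $(K-k)$-th terms, using the elementary factorization
\begin{align*}
  (1+\rho^K)-(\rho^k+\rho^{K-k})=(1-\rho^k)(1-\rho^{K-k})\ge 0 \qquad (1\le k\le K-1),
\end{align*}
with equality only at $\rho=1$. Summing the resulting bounds $\rho^k+\rho^{K-k}\le 1+\rho^K$ yields $1/P(\rho)\le (K+1)/2$, hence $P(\rho)\ge 2/(K+1)$ with equality iff $\rho=1$. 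It then remains to identify the minimizer: at $\rho=1$ the distribution $\nu_1$ is uniform on $\{0,\dots,K\}$ with mean $K/2$, so~\eqref{fixep} gives $s=K/2+\lambda/\mu$, proving the first sentence of (i). The boundary behavior follows by inspecting $P(\rho)$ as $\rho\to0$ and $\rho\to\infty$ (both limits give $P\to1$, corresponding to $s\to0$ and $s\to\infty$).

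For (ii), the plan is a local asymptotic analysis of the parametric curve~\eqref{eq:parametric} near the minimum. Writing $\rho=e^\theta$ makes $P$ an even function of $\theta$ (by the symmetry above), so a second-order expansion at $\theta=0$ gives $P(\theta)=\frac{2}{K+1}\bigl(1+\frac{K(K-1)}{12}\theta^2+o(\theta^2)\bigr)$, while differentiating~\eqref{fixep} shows $ds/d\theta|_{\theta=0}=\lambda/\mu+\mathrm{Var}(\nu_1)=\lambda/\mu+K(K+2)/12$, the variance of the uniform law. Since $dP/d\theta$ vanishes at the minimum, the curvature of the performance curve in the variable $s$ is $d^2P/ds^2=\bigl(d^2P/d\theta^2\bigr)/\bigl(ds/d\theta\bigr)^2$ at $\theta=0$, which is of order $K/(K^2)^2\sim K^{-3}\to0$; moreover the $\lambda/\mu$ contribution to $ds/d\theta$ is lower order in $K$, which is exactly the claimed insensitivity to $s$ and to $\lambda/\mu$. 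Equivalently, after rescaling $s$ by $K$ one checks that $(s/K,\,KP)$ converges to a fixed $O(1)$ limit curve independent of $\lambda/\mu$, so the flat region around the optimum has width of order $K$ in $s$.

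I expect the routine but slightly delicate part to be making (ii) precise: the statement \emph{flatter and insensitive} must be pinned to a concrete quantity (I would use the vanishing curvature $d^2P/ds^2\sim K^{-3}$, or the convergence of the rescaled curve), and the expansions must be carried out carefully enough to control the $o(\theta^2)$ remainder uniformly in $K$. Part (i), by contrast, is essentially complete once the pairing inequality $(1-\rho^k)(1-\rho^{K-k})\ge0$ is in hand, which is the single real idea of the proof.
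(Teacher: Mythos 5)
Your proposal is correct, and its overall architecture coincides with the paper's: parametrize the equilibrium by $\rho$, use the strict monotonicity of $s(\rho)$ in \eqref{fixep} to reduce the optimization to a one-dimensional problem in $\rho$, identify $\rho=1$ (hence $s=K/2+\lambda/\mu$) as the minimizer, and establish flatness via the curvature $\psi''(s_0)=\varphi''(1)/s'(1)^2\sim K^{-3}$. Where you genuinely diverge is in how the minimum is certified in part (i): the paper differentiates $\varphi(\rho)=1-(\rho^K-\rho)/(\rho^{K+1}-1)$, obtains $\varphi'(\rho)=\bigl(\rho^{2K}-1+K(\rho^{K-1}-\rho^{K+1})\bigr)/(\rho^{K+1}-1)^2$, and studies the sign of the numerator by a further differentiation, whereas you avoid calculus entirely via the reflection symmetry $\nu_{1/\rho}(k)=\nu_\rho(K-k)$ and the pairing inequality $\rho^k+\rho^{K-k}\le 1+\rho^K$, i.e.\ $(1-\rho^k)(1-\rho^{K-k})\ge0$, summed to give $1/P(\rho)\le(K+1)/2$ with equality iff $\rho=1$. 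Your argument is shorter, gives the global lower bound $2/(K+1)$ directly with a clean equality case, and makes the symmetry around the optimum transparent; the paper's derivative computation additionally yields strict monotonicity of $\varphi$ on $(0,1)$ and $(1,\infty)$, which is slightly more information about the shape of the curve away from the optimum. Likewise, your identification of $s'(1)=\lambda/\mu+\mathrm{Var}(\nu_1)=\lambda/\mu+K(K+2)/12$ via the exponential-family structure in $\theta=\log\rho$ reproduces the paper's $\lambda/\mu+K^2/12+K/6$ with less computation. Your closing caveat about part (ii) is apt but not a gap: the paper's claim is itself pinned only to the vanishing of the second derivative at $s_0$, for which no uniform control of the remainder is needed; the uniformity would only matter for your stronger (and optional) statement about convergence of the rescaled curve $(s/K,\,KP)$.
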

\begin{proof}
  Let $\varphi(\rho)=\bar{y}_0(\rho)+\bar{y}_K(\rho) = 1-(\rho^K-\rho)/(\rho^{K+1}-1))$
  and $s(\rho) = \lambda\rho/\mu + \sum_{k=1}^K k \rho^k (1-\rho)/(1-\rho^{K+1}) $. Functions $\varphi$ and
  $s$ are well defined on $[0,\infty)$.  The proportion of problematic stations
  as a function of $s$ is given by $\psi=\varphi \circ s^{-1}$. First we prove
  that $\psi$ has a minimum at $s_0=s(1)$ which is $2/(K+1)$.
  For that, differentiating function $\varphi$ with respect to $\rho$ gives
  \begin{align*}
    \varphi'(\rho)=\frac{\rho^{2K}-1+K(\rho^{K-1}-\rho^{K+1})}{(\rho^{K+1}-1)^2}.
  \end{align*}
  Differentiating the numerator and studying the variation, it holds
  that the numerator strictly increases on $[0,\infty)$ and
  $\varphi'(1)=0$. Thus $\varphi$ strictly decreases on $]0,1[$,
  strictly increases on $]1,+\infty[$ and has a minimum at $1$ which
  is $\varphi(1)=2/(K+1)$. This shows the optimal number of bikes per
  station corresponds to $\rho=1$, and thus $s(1)=K/2+\lambda/\mu$ is
  the optimal reliable proportion of bikes per station. This leads to
  a proportion of problematic stations of $2/(K+1)$ and concludes the
  proof of \emph{(i)}.

  To prove \emph{(ii)}, note that the second derivative of $\psi=\varphi
  \circ s^{-1}$ at $s_0=s(1)$ 
  is $\psi''(s_0)=\varphi'' (1)/s'(1)^2$. An asymptotic expansion
  of $\varphi$ at $\rho=1$ is given by
  \[
  \varphi(\rho)=\frac{2}{K+1}+\frac{1}{6} \frac{K(K-1)}{K+1}(\rho-1)^2+O((\rho-1)^3)
  \]
  and therefore  $\varphi'' (1)=K(K-1)/3(K+1).$ Moreover,
  differentiating $s$ gives that $s'(1)=\lambda/\mu+K^2/12+K/6$ which
  leads to
  \begin{align*}
    \psi'' (s_0)=\frac{K(K-1)}{3(K+1)(\lambda/\mu+K^2/12+K/6)^2} \sim_{K\to\infty}\frac{48}{K^3}.
  \end{align*}
  This means that $\psi$ is never sharp for the range of values
  considered in this paper, \emph{i.e.}, $K\geq 10$. Moreover, $ \psi''
  (s_0)$ goes quickly to $0$ as $K$ grows.
   \qed
\end{proof}

This theorem indicates that, even for a homogeneous system for which
the number of bikes per station is chosen knowing all parameters of
the users, the proportion of problematic stations decreases only at
rate $1/K$. This is problematic for practical situations where, for
space constraints and construction costs, station capacities are often
fewer than 20 or 30 bikes. A system with $30$ bikes per station would
lead to a proportion of problematic stations of $2/31\approx
6.5\%$. Although it might be acceptable if this bike-sharing system is
used only once in a while, a probability of $6.5\%$ for a station to
be problematic is too large for a reliable daily mode of
transportation.

\begin{figure}[ht]
  \centering
  \begin{tabular}{cc}
    \subfigure[\label{fig:K30}The capacity of the stations is $K=30$.]
    {\includegraphics[width=.47\linewidth]{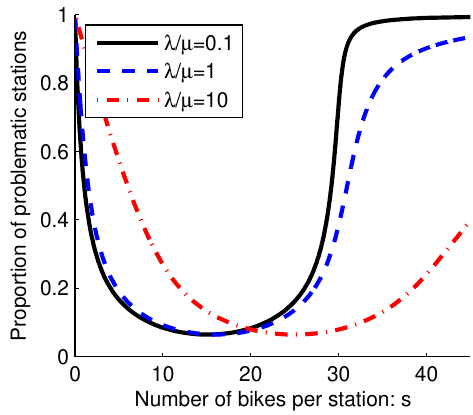}}&
    \subfigure[\label{fig:K100}The capacity of the stations is $K=100$.]
    {\includegraphics[width=.47\linewidth]{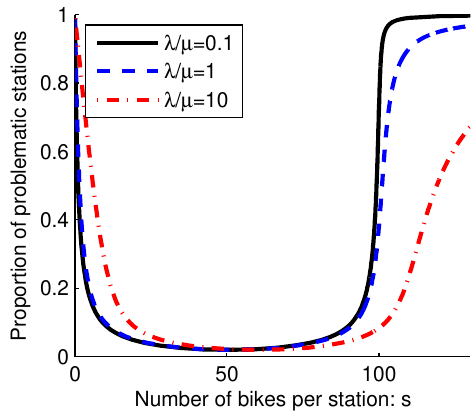}}
  \end{tabular}
  \caption{Proportion of problematic stations as a function of the
    number of bikes per station for two values of the size of stations
    $K$.  On the $x$-axis is the average number of bike per station
    $s$. For both scenarios, we plot $\lambda/\mu=0.1$, $\lambda/\mu=1$ and
    $\lambda/\mu=10$.}
  \label{fig:performance_onecluster}
\end{figure}

The results of Theorem \ref{th:onecluster} are illustrated by
Figure~\ref{fig:performance_onecluster}, which plots the performance  as the parametric curve given by Equation~\eqref{eq:parametric} for two values of the station capacity and three values of 
$\lambda/\mu$.

When $K$ is fixed to $30$ and $\lambda/\mu=1$, the performance is
almost equally good with $10$ to $20$ bikes per station (\emph{i.e.},
$s\in[K/3;2K/3]$).  However, as soon as the number of bikes per
station is lower or higher, the performance decreases
significantly. The problematic case is mainly due to empty stations at
low $s$ versus saturated ones at large $s$. When the size of the stations
is $K=100$, the performance is less sensitive to the number of bikes
per station. As pointed out by Theorem~\ref{th:onecluster}, in this
case, the proportion of problematic stations is $2/(K+1)\approx 2\%$:
Multiplying the station capacity by $3$ divides by $3$ the minimum
proportion of problematic stations. Having some stations designed to
host up to $100$ bikes is realistic for stations near a subway for
example, but having all stations in a city with $100$ slots is very
costly in terms of space and installation.

When $\lambda/\mu=10$, the situation is similar to the case
$\lambda/\mu=1$, with curves shifted to the right. The minimum number
of problematic stations is the same, only the optimal reliable fleet
size is changed.  This can be deduced from \eqref{eq:parametric} as
the term $\lambda/\mu$ affects only $s(\rho){=}\rho \lambda/\mu {+}
\sum k\rho^k \bar{y}_0(\rho)$ and not the proportion of problematic
stations $\bar{y}_0(\rho){+}\bar{y}_K(\rho)$.

These results suggest that without incentives for users to return
their bikes to a non-saturated station or without any load-balancing
mechanisms, the implementation of a bike-sharing system will always
observe a poor performance, even if the system is homogeneous
and there are no preferred areas. In a real system where some regions
are more crowded than others (\emph{e.g.}, because of the trips from
residential areas to work areas), the situation can only be worse. See
\cite{velib-aofa} for a study.  In the following, we will examine
simple mechanisms that improve dramatically the situation.

\subsection{If People  Return the Bikes to a Non-saturated Station}
\label{sec:avoiding}

Before studying incentive or regulation mechanisms, in this section we
study a variant of the model where users know which stations are empty
or saturated. They always arrive to non-empty stations and return
their bikes only to non-saturated stations.

The dynamics of the system are slightly modified as follows.  As
before, there is a Poisson arrival process in the system at rate
$N\lambda$, but each arriving user picks at random a station among the
non-empty stations. If there are no non-empty stations, she leaves the
system. If the user manages to find a bike, then after a time
exponentially distributed with parameter $\mu$, she arrives at a
non-saturated station picked at random (\emph{i.e.}, with fewer than
$K$ bikes), returns her bike at this station and leaves the system.
Note that there is always a non-saturated station (this is the case if
$s<K$).

The transitions of $(Y^N(t))$ are now given by
\begin{align*}
  y & \rightarrow y+\frac{1}{N}(e_{k-1}-e_k) & \frac{\lambda y_k}{1-y_0} N1_{k>0,\; y_0<1}\\
  y & \rightarrow y+\frac{1}{N}(e_{k+1}-e_k) & \frac{y_k}{1-y_K} \mu \lp
  sN-\sum_{n=0}^K n y_n N\rp1_{k<K,\; y_K<1}.
\end{align*}
The differential equation is replaced by $\dot{y}=f(y)$ where
\begin{equation*}
  f(y) = \sum_{k=0}^K y_k\left( \frac{\lambda}{1-y_0}
    (e_{k-1}-e_k)1_{k>0}+\frac{\mu
      (s-\sum_{k=1}^Kky_k)}{1-y_K}(e_{k+1}-e_k)1_{k<K}\right).
\end{equation*}
The function $f$ is discontinuous when $y_0$ goes to $1$ or when $y_K$
goes to $1$. Nevertheless, it can be shown with elementary arguments
that if $\lambda/\mu<s<K+\lambda/\mu$ then the differential equation
$\dot{y}=f(y)$ has a unique solution.  In the following, we assume
that $\lambda/\mu<s<K+\lambda/\mu$. As for the previous model,
equation $\dot{y}=f(y)$ can be rewritten as $\dot{y}=yL_y$. This time,
$L_y$ is the infinitesimal generator of an $M/M/1/K$ queue with
arrival rate $\mu(s-\sum_k ky_k)/(1-y_K)$ and service rate
$\lambda/(1-y_0)$.

By the same method as in the previous section, it can be proved that
the dynamical system has a unique fixed point $\bar{y}$, which is solution
of the equation
$s=\lambda/\mu + \sum_{k=1}^Kk\bar{y}_k(\rho)$.
Moreover, following the same lines as Tibi \cite[Proposition
4.3]{Tibi-1}, the steady-state, denoted by $Y^N(\infty)$, converges as
$N$ gets large, to this fixed point.  The limiting steady-state of the
number of bikes at a station is, again, geometrically distributed,
with parameter $\rho$ given by the previous equation.  The fact that
the term $\rho \lambda/\mu$ is replaced here by $\lambda/\mu$ can be
simply explained. Each user is accepted in the system and returns the
bike after one trip with mean time $1/\mu$. There are, on average,
$\lambda/\mu$ users riding per station.

When studying the fixed point of the system, we find that the main
difference with the original model studied in Section
\ref{sec:propProblematicStations} is the expression of $s$. Therefore,
the performance of the system is easily plotted by a parametric curve
similar to \eqref{eq:parametric}.
\begin{figure}[ht]
  \centering
  \begin{tabular}{cc}
    \includegraphics[width=.7\linewidth]{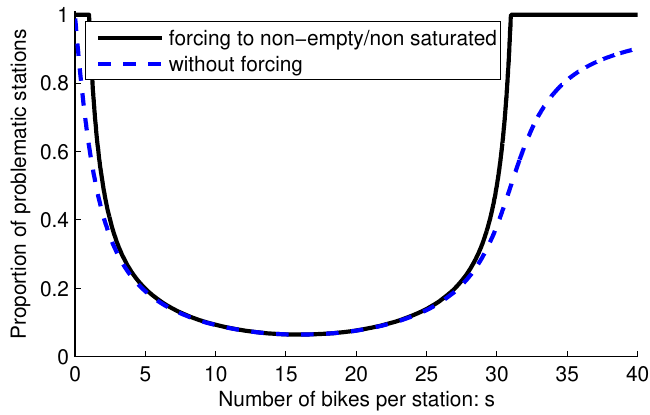}
  \end{tabular}
  \caption{Proportion of problematic stations as a function of the number
    of bikes per station when we force people to go to a non-saturated
    station compared to the proportion if we do not force people. Values
    for $K=30$ and $\lambda/\mu=1$.}
  \label{fig:go_full}
\end{figure}
The proportion of problematic stations has a similar shape for this
model and the one of the previous section. As before, the minimal
proportion is $2/(K+1)$ and is attained for $s=K/2+\lambda/\mu$. When
$s$ is not equal to $K/2+\lambda/\mu$, this proportion will be higher
than in the classic model.  This is illustrated in
Figure~\ref{fig:go_full} where the proportion of problematic stations
for that model is compared with the model studied in Section
\ref{sec:basic-model}.  For small $s$, the less satisfactory behavior
of the regulated system is due to the fact that every user can enter
the system. This resembles the influence of a larger $\lambda$ in
Figure 1. For large $s$, this poor behavior of the system is related
to the improvement on the customer trip time. It has the same effect
as the influence of $1/\mu$ in the basic system.

This shows that although forcing people to go to a non-saturated or
non-empty station reduces the unhappy users because anyone can take or
leave a bike at anytime, it makes the system more congested and
degredate the situation for users who are not aware of such
mechanisms.


\section{Incentives and the Power of Two Choices} 
\label{sec:incentives}

In this section, we consider that, when a user wants to return her
bike somewhere, she indicates two stations and the bike-sharing system
indicates to her which one of the two has the least number of bikes
available. We show that, when the two stations are picked \emph{at
  random}, the proportion of problematic stations diminishes as
$\sqrt{K}2^{-K/2}$ (instead of $1/K$ in the original model). The
performance is thus improved dramatically, and even if only a small
percentage of users obey this rule. This result is similar to the
well-known \emph{power of two choices} that has been proved to be a
very efficient load balancing strategy, see \cite{mitzenmacher}.

\subsection{The Two-Choice Model and Its Steady-State Analysis}

We consider an homogeneous model with $N$ stations and $s$ bikes per
station. As before, users arrive at rate $\lambda$ in each station and
take a bike if the station is not empty. Otherwise, they leave the
system. When a user chooses her destination, instead of choosing one
station, she picks two stations at random, travels and returns the
bike to the one that has the lowest number of bikes available.

Let $u_k(t)$ be the proportion of stations with $k$ or more bikes at
time $t$ ($k\in\{0\dots K\}$) \emph{i.e.},
$u_k(t)=y_k(t)+\dots+y_K(t)$. The state of the system can be described
by the vector $(u_k(t))_{k\in \{0,\ldots K\}}$ that is such that
$u_K(t)\le u_{K-1}(t)\le \dots \le u_0(t)=1$.

There are two types of transitions for the Markov process. Suppose $(u(t))$ is at $u=(u_0,\ldots,u_K)$. The first
one is a transition from $u$ to $u-e_k$, when a user that takes a bike from a station with $k$ bikes. This
happens at rate $N \lambda(u_{k}-u_{k+1}1_{\{k<K\}})$.  The second type is a transition from $u$ to $u+e_k$, when a
user returns a bike. The number of bikes locked at  stations is $\sum_{k=0}^K
k y_k=\sum_{k=1}^K u_k$. Hence, there are $N(s-\sum_{k\geq 1} u_k)$
bikes in transit. As a user chooses the least loaded among two
stations, a user returns a bike at a station with $k-1$ bikes at rate $\mu N
(u_{k-1}^2-u_k^2)(s-\sum_{k=1}^K u_k)$. As in Section~\ref{sec:model},
as $N$ grows large, the behavior of the system can be approximated by
the dynamics of the following ODE:
\begin{align}
  \dot{u_k}(t) &= -\lambda (u_{k}(t)-u_{k+1}(t)) + \mu (u_{k-1}^2(t)
  -u_k^2(t))(s-\sum_{k=1}^K u_k(t)),
  \label{eq:power2}
\end{align}
for $k\in\{1,\ldots,K\}$ and $u_0(t) =1$ and $u_\ell(t)=0$ for
$\ell>K$.

The following theorem shows that these incentives dramatically improve
the performance compared to the original model where users go to a
station at random (Theorem~\ref{th:onecluster}). For a given capacity
$K$, the optimal proportion of problematic stations goes from $2/(K+1)$ in the
original model to $\sqrt{K}2^{-K/2}$. 
\begin{theorem}
  \label{th:two-choice}
  Assume that all users obey to the two-choice rule. Then, the
  corresponding dynamical system, given by~\eqref{eq:power2} has a
  unique fixed point.  
  The proportion of problematic stations is lower than
  $4\sqrt{K}\;2^{-K/2}$ for all $s\in[K/2+\lambda/\mu;K-\log_2K-3+\lambda/\mu]$.
\end{theorem}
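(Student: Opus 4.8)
The plan is to analyze the fixed point $\bar u$ of~\eqref{eq:power2}, which satisfies, by setting $\dot u_k=0$ for $k\in\{1,\dots,K\}$,
\begin{align*}
  \lambda(\bar u_k-\bar u_{k+1}) = \mu(\bar u_{k-1}^2-\bar u_k^2)\Big(s-\sum_{j=1}^K\bar u_j\Big),
\end{align*}
with $\bar u_0=1$, $\bar u_{K+1}=0$. First I would establish existence and uniqueness: writing $\beta=\mu(s-\sum_j\bar u_j)/\lambda$ for the (unknown, but positive) effective load, the relation $\bar u_k-\bar u_{k+1}=\beta(\bar u_{k-1}^2-\bar u_k^2)$ lets one solve for all $\bar u_k$ recursively as functions of the single scalar $\beta$ once $\bar u_0=1$ is fixed, and then the constraint $s=\lambda\beta/\mu+\sum_{j=1}^K\bar u_j(\beta)$ pins down $\beta$. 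I would show each $\bar u_k(\beta)$ is monotone in $\beta$ and that the right-hand side strictly increases, giving a unique solution for each $s$, exactly parallel to the uniqueness argument via Equation~\eqref{fixep} in the basic model.

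The heart of the proof is the doubly-exponential decay of the tail of $\bar u$, which is the hallmark of the power-of-two-choices phenomenon. Telescoping the fixed-point relation gives $\bar u_k=\beta(\bar u_{k-1}^2-\bar u_k^2)+\bar u_{k+1}$, and dropping the nonnegative terms $\bar u_{k+1}$ and $\bar u_k^2$ yields the clean recursive bound $\bar u_k\le\beta\,\bar u_{k-1}^2$. Iterating this from some base index gives $\bar u_k\le\beta^{2^{k}-1}\bar u_0^{2^k}$-type control, so that once $\bar u_j$ drops below a constant strictly less than $1$ the tail collapses at rate $2^{-(K-j)}$. The problematic-station proportion is $\bar y_0+\bar y_K=(1-\bar u_1)+\bar u_K$, so I must bound both $\bar u_K$ (full stations) from above and $1-\bar u_1$ (empty stations) from above. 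The quadratic recursion controls $\bar u_K$; a symmetric argument on the complementary variables $1-\bar u_k$, reflecting $k\mapsto K-k$, controls $1-\bar u_1$, since by the symmetry of the $M/M/1/K$-type structure the empty-station tail decays doubly-exponentially as well. Combining the two and optimizing the $\sqrt K$ prefactor should yield the claimed bound $4\sqrt K\,2^{-K/2}$.

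The main obstacle is the interval $s\in[K/2+\lambda/\mu;\,K-\log_2K-3+\lambda/\mu]$: I need to translate the range of $s$ into a usable range for the effective load $\beta$, and in particular certify that $\beta$ stays bounded (roughly $\beta\approx1$ near $s=K/2+\lambda/\mu$, growing only mildly as $s$ increases) so that the constant in $\bar u_k\le\beta\,\bar u_{k-1}^2$ does not destroy the doubly-exponential decay. The upper endpoint $K-\log_2K-3$ is precisely what is needed so that the ``entry index'' $j$ at which the tail starts collapsing is far enough from $K$ that $\bar u_K$ has undergone $\gtrsim K/2$ squarings; the $\log_2K$ correction and the additive constant $3$ are the slack absorbing the $\sqrt K$ prefactor. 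Making these constants explicit and checking the monotonicity of $\beta$ as a function of $s$ across the whole interval is the delicate bookkeeping; the conceptual step (the quadratic recursion) is straightforward, but pinning $\beta$ uniformly over the stated $s$-range is where the care is required.
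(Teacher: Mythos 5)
Your overall strategy---parametrize the fixed point by the effective load, exploit the quadratic recursion, then translate the $s$-range into a load-range---is the same as the paper's, and your uniqueness sketch is essentially the paper's argument (the paper pins down the second degree of freedom $\bar u_1$ via a monotone sequence of roots, then uses monotonicity of $s$ in the load). But there are two problems in the core estimate. The smaller one: the bound $\bar u_k\le\beta\,\bar u_{k-1}^2$ does not follow from the single relation $\bar u_k=\beta(\bar u_{k-1}^2-\bar u_k^2)+\bar u_{k+1}$ by ``dropping the nonnegative term $\bar u_{k+1}$''; that term is \emph{added} on the right, so discarding it destroys the upper bound. The correct route is to telescope the fixed-point relation down to $K{+}1$ using $\bar u_{K+1}=0$, which yields $\bar u_{k+1}=\rho(\bar u_k^2-1)+\bar u_1=\rho\bar u_k^2-(\rho-\bar u_1)$, and then to prove $\bar u_1<\rho$ (by contradiction with $\bar u_{K+1}=0$) before concluding $\bar u_{k+1}\le\rho\bar u_k^2\le\bar u_k^2$ for $\rho\le1$. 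This is recoverable.

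The serious gap is the treatment of the empty stations $\bar y_0=1-\bar u_1$. The proposed ``symmetric argument on the complementary variables, reflecting $k\mapsto K-k$'' cannot work: the model has no such symmetry. Bikes are taken by a one-choice (linear) mechanism and returned by a two-choice (quadratic) one, so the reflected variables do not satisfy a quadratic recursion, and the empty-station proportion is in fact \emph{not} doubly-exponentially small---it is of order $2^{-K/2}$ at best. In the paper everything is funneled through the single scalar $\varepsilon=\rho-\bar u_1$: the boundary relation $0=\rho(\bar u_K^2-1)+\bar u_1$ combined with $\bar u_K\le\bar u_1^{2^{K-1}}$ forces $\varepsilon\le K2^{-K}$, after which $\bar y_0=1-\bar u_1=(1-\rho)+\varepsilon$ and $\bar y_K=\bar u_K=\sqrt{\varepsilon/\rho}$ \emph{exactly}. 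Note that the dominant term $\sqrt{K}\,2^{-K/2}$ of the final bound comes from $\sqrt{\varepsilon/\rho}$, not from iterating the squaring on $\bar u_K$, and that $\bar y_0$ is small only because the stated $s$-range is shown to force $\rho\in[1-2^{-K/2},1]$, so that $1-\rho\le2^{-K/2}$. Without this $\varepsilon$-based mechanism (or an equivalent), your sketch has no route to bounding $\bar y_0$, and the claimed bound cannot be assembled.
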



\begin{proof}
  First show that the ODE~\eqref{eq:power2} has a unique fixed
  point.  Let $\rho:=\mu(s-\sum_{k\geq 1} \bar{u}_k)/\lambda$. The key point is to reduce the equation giving the fixed point to a first order recurrence equation. Indeed, a direct recurrence
  gives that a fixed point must satisfy $\bar{u}_0=1$ and for all $k\le K$:
  \begin{align}\label{JSQ2}
    \bar{u}_{k+1}=\rho(\bar{u}_k^2-1)+\bar{u}_1
  \end{align}
  with $\bar{u}_{K+1}=0$.
  
  For all $k\ge1$ and $x\in[0;1]$, define $\bar{u}_k(x)$ by $\bar{u}_0(x)=1$
  and $\bar{u}_{k+1}(x) = \rho(\bar{u}_k(x)^2-1)+x$.
  By induction on $k$, there is a increasing sequence
  $x_1<x_2<x_3\dots<\rho$ such that $x\mapsto \bar{u}_k(x)$ is strictly
  increasing on $[x_{k-1},\rho]$ and $\bar{u}_k(x_k)=0$. Indeed,
  \begin{itemize}
  \item This is true for $k=1$ because $\bar{u}_1(x)=x$. 
  \item Then, if it is true for some $k\ge1$, then $\bar{u}_{k+1}$ is
    increasing on $[x_k,\rho]$ because $x\mapsto \bar{u}_{k}(x)$ is increasing
    and positive on $[x_k;\rho]$. Moreover, 
    $\bar{u}_{k}(x_{k-1})=x_{k-1}-\rho<0$ and $\bar{u}_{k+1}(\rho)=\rho(\bar{u}_{k}^2(\rho)-1)+\rho=\rho(\bar{u}_k^2(\rho))>0$.
  \end{itemize}
  This shows that there exists a unique $x_k$ such that $\bar{u}_k(x)=0$ on
  $[x_{k-1},1]$. 
  
  A fixed point of Equation~\eqref{eq:power2} is a vector
  $(\bar{u}_1(x),\bar{u}_2(x),\dots \bar{u}_K(x))$ such that $\bar{u}_k(x)\ge0$ and
  $\bar{u}_{K+1}(x)=0$. By the property stated above, for a fixed $\rho$,
  there is a unique fixed point, which is 
  $(\bar{u}_1(x_{K+1}),\dots \bar{u}_K(x_{K+1}))$.  Moreover,
  $s=\sum_{k\geq 1}\bar{u}_k+\rho\lambda/\mu$ is increasing in $\rho$, which implies
  that there is a unique fixed point $\rho$ when $s$ is fixed.
  
  Let $k\le K$ and assume that $\rho\le1$.  If $\bar{u}_1\ge\rho$, a direct
  recurrence shows that $\bar{u}_{k}\ge \rho^{2^{k}-1}$ for $k\le K+1$ ,
  which contradicts the fact that $\bar{u}_{K+1}=0$. Therefore, 
  $\bar{u}_1<\rho$. Hence, for all $1\le k\le K$, 
\begin{equation*}
  \bar{u}_{k+1}=\rho (\bar{u}_k^2-1)+\bar{u}_1= \rho \bar{u}_k^2+(\bar{u}_1-\rho)< \rho \bar{u}_k^2 \le \bar{u}_k^2
\end{equation*}
which implies that $\bar{u}_{k}\le \bar{u}_1^{2^{k-1}}$. As $\bar{u}_{K+1}=0$,
using Equation~\eqref{JSQ2},
\begin{equation}
  0 = \bar{u}_{K+1} = \rho (\bar{u}_K^2-1)+\bar{u}_1 \le  \rho(\bar{u}_1^{2^{K}}-1)+ \bar{u}_1\le \bar{u}_1^{2^{K}}+ \bar{u}_1-\rho.
  \label{eq:2choice1}
\end{equation}

Let $\delta=1-\rho$ and $\varepsilon=\rho-\bar{u}_1\ge0$.
Using Equation~\eqref{eq:2choice1},
\begin{equation}
  0\le(\rho-\varepsilon)^{2^K}-\varepsilon  \le (1-\varepsilon)^{2^K}-\varepsilon\le \exp(-2^K\varepsilon)-\varepsilon.
  \label{eq:2choice2}
\end{equation}
If $\varepsilon\ge K2^{-K}$, then
$(1-\varepsilon)^{2^K}-\varepsilon\le\exp(-K)-K2^{-K}$,
which is less than $0$ for all $K\ge1$. This contradicts
Equation~\eqref{eq:2choice2} and shows that
$\varepsilon\le K2^{-K}$.

The proportion of empty stations is $\bar{y}_0=1-\bar{u}_1=\delta+\varepsilon$. The
proportion of saturated stations is $\bar{y}_K=\bar{u}_K$, which is such that
$\rho(\bar{u}_K^2-1)+\bar{u}_1=0$. Thus, $\bar{u}_K=\sqrt{(\rho-\bar{u}_1)/\rho}$. This shows
that, for all $\rho\in[1-2^{-K/2};1]$, the proportion of problematic
stations is less than
\begin{equation*}
  \bar{y}_0+\bar{y}_K = \delta+\varepsilon + \sqrt{\varepsilon/\rho}\le2^{-K/2}+K2^{-K}+\sqrt{\frac{K2^{-K}}{1-2^{-K/2}}}.
\end{equation*}
This quantity is less than $4\sqrt{K}2^{-K/2}$ for all $K\ge1$ and is
asymptotically equivalent to $\sqrt{K}2^{-K/2}$.

The fleet size $s$, equal to $\sum_{k=1}^K\bar{u}_k+\rho\lambda/\mu$, is an increasing function of $\rho$. Moreover,  $\bar{u}_k\le
\bar{u}_1^{2^{k-1}}\le\rho^{2^{k-1}}.$ Hence, when $\rho=1-2^{-K/2}$, 
\begin{equation*}
  \sum_{k=1}^K\bar{u}_k\le \sum_{k=1}^K(1-2^{-K/2})^{2^{k-1}}\le \sum_{k=1}^K
  \exp(-2^{k-1-K/2})=\sum_{i=-K/2}^{K/2-1}\exp(-2^{i})< K/2.
\end{equation*}
This shows that if $s\ge K/2+\lambda/\mu$, 
$\rho\ge1-2^{-K/2}$.

When $\rho=1$, a direct induction on $k$ shows that
$\bar{u}_k\ge\max(0,1-2^k\varepsilon)$. Let $j$ be such that
$1-2^j\varepsilon\ge0>1-2^{j+1}\varepsilon$.  For such a $j$,
$j+1\ge\log_2\varepsilon=K-\log_2K$. Hence
\begin{align*}
  \sum_{k=1}^K\bar{u}_k \ge \sum_{k=1}^j 1-2^k\varepsilon \ge j-(2^{j+1}-2)\varepsilon\ge j-2^{j+1}\varepsilon\ge K-\log_2K-3.
\end{align*}
This implies that, for all
$s\in[K/2+\lambda/\mu;K-\log_2K-3+\lambda/\mu]$, 
$\rho\in[1-2^{-K/2};1]$.\qed
\end{proof}

Assume now that only a fraction of the users follow this rule and that
the others go to a station at random. This could happen if the users
are rewarded, when they obey the two-choice rule, like in the Velib+
system. To model this behavior, we assume that each user obeys to the
\emph{two-choice rule} with probability $r$ and otherwise chooses only
one station and returns the bike to it.  The dynamics are similar to
Equation~\eqref{eq:power2} and an equilibrium point $\bar{u}$
satisfies the following equations: $\bar{u}_0=1$ and
\begin{align}
  \bar{u}_k-\bar{u}_{k+1}=\rho \lp r(\bar{u}_{k-1}^2
  -\bar{u}_k^2)+(1-r)(\bar{u}_{k-1}-\bar{u}_k)\rp\text{ for }k\in\{1\ldots K\}.
  \label{eq:two-choice_r}
\end{align}
In this case, the fixed point Equation~\eqref{JSQ2} becomes
$\bar{u}_{k+1}=\rho(r(1-\bar{u}_k^2)+(1-r)(1-\bar{u}_k))+\bar{u}_1$.  The proof of the
uniqueness of the solution of Equation~\eqref{JSQ2} can be easily adapted to
show that Equation~\eqref{eq:two-choice_r} also has a unique fixed
point.


\subsection{Impact on the Performance}
\label{sec:performance}
 Due to the uniqueness of the solution $\bar{u}(\rho)$, 
the proof of Theorem~\ref{th:two-choice}, especially Equation~\eqref{JSQ2}, provides an efficient way to
compute $u$ as a function of $\rho$. This shows that, if $\rho$ is
fixed, the number of bikes in the system is $s(\rho)=\lambda\rho/\mu +
\sum_{k=1}^K \bar{u}_k(\rho)$. The performance indicator can be plotted  by
using a parametric curve of parameter $\rho$.  These results are
reported in Figure~\ref{fig:power2} and indicate that the performance
of the system is radically improved compared to the original case
(Figure~\ref{fig:performance_onecluster}), even if $20\%$ of users
obey the two-choice rule.

\begin{figure}[ht]
  \centering
  \begin{tabular}{cc}
    \subfigure[\label{fig:r=1}Proportion of problematic stations as a
    function of the number of bikes per station $s$ for
    $r=1$ (everyone obeys the rule) and $r=0$.]{\includegraphics[width=.47\linewidth]{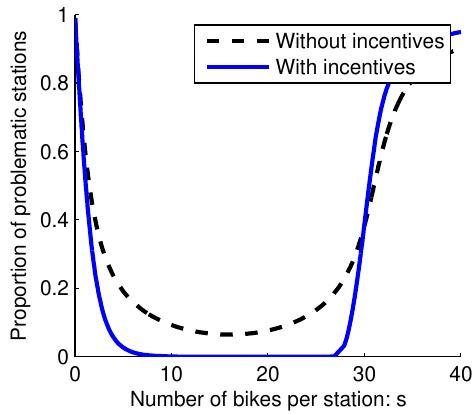}} &
    \subfigure[\label{fig:r_varies}Proportion of problematic stations as a
    function of the 
    proportion of users obeying the two-choice rule (the $y$-axis is in logscale).] 
    {\includegraphics[width=.47\linewidth]{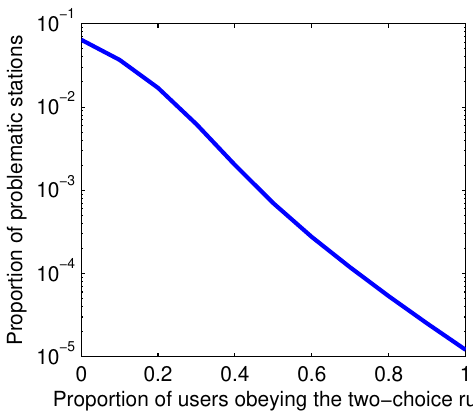}} 
  \end{tabular}
  \caption{Performance of the system in a two-choice system: \ref{fig:r=1}
    when everyone obeys the two-choice rule and the number of bikes per
    station varies. \ref{fig:r_varies}: when the number of bikes per
    station is $16$ and $r$ varies ( $K=30$ and
$\lambda/\mu=1$).}
\label{fig:power2}
\end{figure}

In Figure~\ref{fig:r=1} we report the proportion of problematic
stations as a function of the proportion of bikes per station when
everyone follows the two-choice rule. We observe that the optimal
performance of the system is much better than in the original system
(here $K=30$ and $\lambda/\mu=1$). Although in the original system,
the proportion of problematic stations is at best around $7\%$, here
the proportion of problematic stations can be as low as $10^{-6}$
(this is lower than the bound of Theorem~\ref{th:two-choice}, which is
$2\sqrt{31}\cdot2^{-30/2}\approx 3\cdot 10^{-4}$). Moreover, this
curve is rather insensitive to variations in the number of bikes: The
proportion of problematic stations is less than $10^{-3}$ if $s$ is
between $10$ and $27$ bikes. An interesting phenomenon occurs when the
average number of bikes per station $s$ exceeds the capacity of the
station. In this case, there is a larger proportion of problematic
stations for the two-choice model than for the original
situation. This is explained by the fact that when a user obeys the
two-choice rule, it is easier for her to return a bike. Hence, when
more users obey the two-choice rule, there are fewer bikes in transit
and the stations are more occupied. This negative effect only occurs
when $s\ge30$, which confirms Theorem~\ref{th:two-choice}: Performance
is low for $s$ less than a value $K-\log_2K-3+\lambda/\mu$ close to
$K$.

On Figure~\ref{fig:r_varies}, the average number of bikes per station
is fixed, $s=16=K/2+\lambda/\mu$, and the proportion $r$ of users who
obey the two-choice rule varies from $0$ to $1$. This shows that the
proportion of problematic stations diminishes rapidly as soon as the
number of users obeying the rule grows. Moreover, the decrease is
approximately exponential: if $25\%$ more users obey the rule, the
proportion of problematic stations is roughly divided by $10$.


\section{Optimal Redistribution Rate}
\label{sec:regulation}
Balancing the number of bikes in various areas in a city is one of the
major issues of bike-sharing systems. A widely adopted solution is the
use of trucks to move bikes from saturated stations to empty
ones. This redistribution mechanism can equalize the one-directional
flows of travelers, for example from residential areas to work areas,
but also the imbalances due to the choices of users.  In this section
the minimal redistribution rate needed to suppress any problematic
station is investigated, and we conclude by showing that it decreases
as the inverse of the station capacity.  Our analysis assumes that
bikes are moved one by one. This assumption will be relaxed in
simulations in Section~\ref{sec:truck_capacity}, thus showing that a
larger truck size does not affect qualitatively the performance.

\subsection{The Redistribution Model and Its Steady-State Analysis}

We consider a homogeneous model of bike-sharing systems with $N$
stations equipped with a truck that visits the stations to adjust
their load. The user behavior is the same as in
Section~\ref{sec:model}. The arrival rate at any station is $\lambda$
and a bike trip takes a time exponentially distributed of mean
$1/\mu$. A truck knows the station occupancies at any time. It goes to
the most loaded station, takes a bike and returns it to the least
loaded station. The trip time of the truck is neglected (the bikes are
assumed to move instantaneously from highly loaded to lightly loaded
stations). This description amounts to one truck that moves bikes at
rate $N\gamma$. The resulting Markov model is the same if there is
$\delta(N)$ trucks moving bikes at rate $N\gamma/\delta(N)$.  In the
rest of the section, we study the effect on the performance of the
ratio $\gamma/\lambda$. This ratio is the average number of bikes per
second that are moved manually by the operator divided by the number
of bikes per second taken by regular users.

As in Section \ref{sec:incentives}, let $u_k(t)$ be the proportion of
stations that have $k$ bikes or more available at time $t$.  In
particular, $u_0(t)=1$ and $u_{K+1}(t)=0$. There are three kinds of
transitions in the system: arrivals and departures of users and
redistribution. The fluid model transitions corresponding to the user
arrivals and departures are the same as in Equation~\eqref{eq:ODE}
with a number of bikes in transit of $N(s-\sum_{k=1}^Ku_k)$.
Moreover, the redistribution part only affects the most loaded and
least loaded stations, \emph{i.e.}, stations that have $k$ bikes
available, where $k$ is such that no stations have less less than $k$
bikes available ($u_{k-1}{=}1$) or no stations have more than $k$
bikes available ($u_{k+1}=0$). This shows that the expected variation
of $u$ during a small time interval is equal to $f(u)=(f_0(u)\dots
f_K(u))$, where for all $k\in\{1,\dots,K{-}1\}$,
\begin{equation*}
  f_k(u)=\lambda(u_{k+1}-u_k) + \mu\left(s-\sum_{k=1}^Ku_k\right) (u_{k-1}-u_k) %
  + \left\{
    \begin{array}{ll}
      \gamma & \mathrm{if~}u_{k-1}=1\mathrm{~and~}u_k<1\\
      -\gamma & \mathrm{if~}u_{k+1}=0\mathrm{~and~u_k>0}\\
      0 & \mathrm{otherwise}
    \end{array}
  \right.
\end{equation*}
The function $f(u)$ is not continuous in $u$. Hence, the ODE
$\dot{u}=f(u)$ is not well defined and can have zero solutions. To
overcome this discontinuity problem, it has been shown by
\cite{gast-mama,gast-DI}, that this ODE can be replaced by a
differential inclusion $\dot{u}_k\in F(u)$, where $F(u)$ is the convex
closure of the set of values $f(u')$ for $u'$ in a neighborhood of
$u$. This differential inclusion is a good approximation of the
stochastic system as $N$ grows. In particular, as with classic ODE, if
all the solutions of the differential inclusion converge to a fixed
point, then their stationary measures concentrate on this point as $N$
goes to infinity, see \cite{gast-DI}.

In our present case, the differential inclusion is
\begin{align}
  \dot{u}_k &\in \lambda(u_{k+1}-u_k) +
  \mu\left(s-\sum_{k=1}^Ku_k\right) (u_{k-1}-u_k) %
  + G_k(u),\label{eq:DI}\\
  &\quad\text{where } G(u) = \left\{ (a_0\dots a_i, 0\dots0,-b_j\dots
    -b_K) \text{ s.t. } \left\{
      \begin{array}{ll}
        \sum_{k=1}^i a_k = \sum_{k=j}^K b_k = \gamma\\
        a_k =0 \text{ if } u_k<1\\
        b_k =0 \text{ if } u_{k+1}>0\\
        a_k\ge0, b_k\ge0.
      \end{array}
    \right.
  \right \}\nonumber
\end{align}
To ease the presentation, the details of its construction are
omitted. The construction is similar to Section~4.3 of \cite{gast-DI}.
This leads to the following result.
\begin{theorem}
  \label{th:DI}
  Assume that a truck moves bikes from the most loaded station to
  the least loaded one at rate $\gamma N$. If $x=\min(s{-}\lambda/\mu,
  K{-}s{+}\lambda/\mu)$ and
  $\gamma^*=2\lambda\frac{\floor{2x}-x}{\floor{2x}\floor{2x-1}}$ then
  the fixed point of the dynamical system~\eqref{eq:DI} satisfies:
  \begin{itemize}
  \item If $\gamma\ge\gamma^*$, then there is no problematic station.
  \item If $\gamma<\gamma^*$, the proportion of problematic stations
    decreases with $\gamma$.
  \end{itemize}
\end{theorem}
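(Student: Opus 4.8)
The plan is to reduce the search for equilibria of the differential inclusion~\eqref{eq:DI} to a flux-balance recurrence, exactly as the $M/M/1/K$ reduction was used in Sections~\ref{sec:basic-model} and~\ref{sec:incentives}. Writing $\bar{y}_k=\bar{u}_k-\bar{u}_{k+1}$ for the equilibrium proportion of stations with exactly $k$ bikes and $\beta=\mu(s-\sum_{k=1}^K\bar{u}_k)$ for the return rate, a fixed point of~\eqref{eq:DI} must satisfy $\lambda\bar{y}_k=\beta\bar{y}_{k-1}+G_k$ for every $k$, where $G_k$ is the signed truck flux at level $k$. Because the positive part of $G$ is supported on the emptiest occupied levels and its negative part on the fullest occupied levels, the profile is geometric with ratio $\rho=\beta/\lambda$ on the interior of its support and is modified only at the two extreme levels. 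Summing the balance relations telescopes to the global identity $\lambda(1-\bar{y}_0)=\beta(1-\bar{y}_K)$, which already shows that a fixed point with no problematic station ($\bar{y}_0=\bar{y}_K=0$) forces $\rho=1$, i.e. a flat interior profile.

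For the first item I would first record the symmetry $k\leftrightarrow K-k$ of the model, which exchanges $\bar{n}:=s-\lambda/\mu$ with $K-\bar{n}$ and hence lets me assume $x=\bar{n}\le K/2$. Imposing $\rho=1$, the equilibrium with no problematic station is uniform, equal to $\gamma/\lambda$, on an interval of occupied levels, with an arbitrary partial mass $r\in[0,\gamma/\lambda]$ allowed only at the top occupied level (this is precisely the freedom the inclusion leaves through $b_M$). The total truck rate equals $\lambda$ times the peak of the profile, so producing a no-problematic equilibrium with the smallest possible $\gamma$ becomes the linear program: minimize $\max_k\bar{y}_k$ over distributions supported in $\{1,\dots,K-1\}$ with mean $\bar{n}$. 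Pushing the mass as low as possible, the optimum is uniform on $\{1,\dots,q-1\}$ with a partial atom at $q$, where $q=\floor{2x}$ is selected by the requirement $r\in[0,\gamma/\lambda]$; solving the normalization and mean equations gives peak $c=2(q-x)/(q(q-1))$ and hence $\gamma^*=\lambda c=2\lambda(\floor{2x}-x)/(\floor{2x}\floor{2x-1})$, using $q-1=\floor{2x-1}$. Since every no-problematic equilibrium has $\gamma=\lambda\cdot(\text{peak})\ge\lambda c$, no problematic station is possible exactly when $\gamma\ge\gamma^*$.

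For the second item, when $\gamma<\gamma^*$ the equilibrium necessarily has $\bar{y}_0>0$, and I would use the explicit geometric description: in the regime where the truck reflects at both ends one has $\bar{y}_k=\rho^{k-1}\bar{y}_1$ for $1\le k\le K-1$, together with $\bar{y}_0=(\bar{y}_1-\gamma/\lambda)/\rho$ and $\bar{y}_K=\rho^{K-1}\bar{y}_1-\gamma/\lambda$. Closing the system with $\sum_k\bar{y}_k=1$ and the consistency relation $\rho=(1-\bar{y}_0)/(1-\bar{y}_K)$ determines $(\bar{y}_1,\rho)$, and hence $P(\gamma):=\bar{y}_0+\bar{y}_K$, implicitly as a function of $\gamma$; differentiating and checking a sign shows $P$ is continuous and strictly decreasing, reaching $0$ at $\gamma^*$. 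The same must be carried out in the subsequent regime, entered once $\bar{y}_K$ vanishes and the truck only reflects at the bottom, and the two pieces matched continuously to obtain monotonicity on all of $[0,\gamma^*)$.

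The main obstacle is twofold. The delicate analytic point is the monotonicity of $P(\gamma)$: the fixed point is only given implicitly, the exponent $K$ enters through $\rho^{K-1}$, and there is a regime change when the saturated tail disappears, so the sign of $P'(\gamma)$ must be controlled across a non-smooth transition. The second, more technical, difficulty is to justify rigorously that the equilibria of the set-valued map~\eqref{eq:DI} are exactly the flux-balanced profiles above and that, for each $\gamma$, this equilibrium is unique (the discontinuity of $f$ is what forced the inclusion in the first place); here I would follow the construction of Section~4.3 of~\cite{gast-DI} to identify the admissible values of $G$ at a boundary point and to exclude spurious fixed points.
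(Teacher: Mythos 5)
Your overall strategy matches the paper's: reduce the fixed points of the differential inclusion to flux-balance relations across each level, exhibit the flat profile of height $\gamma/\lambda$ at the threshold, and describe truncated-geometric profiles below it. Where you genuinely differ is in how $\gamma^*$ is obtained. The paper guesses the candidate $u_k=1-(k-1)\gamma/\lambda$ on $\{1,\dots,\floor{2x}\}$, exhibits an explicit $g\in G(u)$, and verifies coordinate by coordinate that the inclusion vanishes; the formula for $\gamma^*$ appears only as the value that makes this candidate consistent. You instead derive it: the telescoped identity $\lambda(1-\bar y_0)=\beta(1-\bar y_K)$ forces $\rho=1$ at any no-problematic equilibrium, every such equilibrium is flat at height exactly $\gamma/\lambda$ on its interior, and hence $\gamma^*/\lambda$ is the minimal achievable peak of a distribution with mean $x$ supported in $\{1,\dots,K-1\}$ --- a small linear program whose solution is uniform on $\{1,\dots,\floor{2x}-1\}$ plus a partial atom. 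This extremal characterization both produces the formula and proves at once that no smaller $\gamma$ can work, which the paper's verification does not make explicit. Both you and the paper leave the same two technical debts: uniqueness of the fixed point of the set-valued dynamics (the paper says ``omitted, quite technical''; you defer to \cite{gast-DI}) and a genuine proof that the proportion of problematic stations is strictly decreasing on $[0,\gamma^*)$ (the paper asserts non-increasingness and merely lists the fixed points parametrized by $\rho$).

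One concrete slip in your second item: the relation $\rho=(1-\bar y_0)/(1-\bar y_K)$ cannot close the system, because it is exactly the telescoped sum of the balance equations together with normalization, hence redundant. As written your system has one independent equation for the two unknowns $(\bar y_1,\rho)$. The missing closing relation is the fleet-size identity $s=\rho\lambda/\mu+\sum_{k}k\bar y_k$, i.e.\ the analogue of Equation~\eqref{fixep}; this is also what the paper uses (monotonicity of $\sum_k x_k(\rho)$ in $\rho$) to pin down $\rho$ and get uniqueness for $\gamma<\gamma^*$. With that substitution your regime-by-regime analysis goes through as in the paper.
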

The quantity $\gamma^*$ is called the \emph{optimal redistribution
  rate}. Setting $\gamma=\gamma^*$ is not necessarily optimal in terms
of cost but it corresponds to a key of the performance: When
$\gamma<\gamma^*$, the proportion of problematic station decreases
almost linearly and is zero when $\gamma>\gamma^*$.

\begin{proof}
  Let us assume that $s\le K/2+\lambda/\mu$. The other case is
  symmetric and can be treated similarly. The proportion of
  problematic stations is non-increasing in the redistribution rate
  $\gamma$. We define the vector $u=(u_0,u_1,\dots,u_K)$ by
  \begin{equation*}
    u_k=\left\{
      \begin{array}{ll}
        1-(k-1)\frac{\gamma}{\lambda}&\text{ if } k\le\floor{2x}=\floor{2(s-\frac{\lambda}{\mu})}\\
        0 & \text{ otherwise}
      \end{array}
    \right.
  \end{equation*}
  Let us show that when $\gamma=\gamma^*$, $u$ is a fixed point of the
  differential inclusion~\eqref{eq:DI}, \emph{i.e.}, there exists $g\in
  G(u)$ such that Equation~\eqref{eq:DI} is equal to zero.  Let
  $g=(g_0,g_1,\dots,g_K)$ be a vector such that $g_1=\gamma$,
  $g_{\floor{2s}}=\lambda u_{\floor{2s}}-\gamma$,
  $g_{\floor{2s}+1}=-\lambda u_{\floor{2s}}$ and $g_i=0$
  otherwise. The vector $g$ belongs to $G(u)$, defined in
  Equation~\eqref{eq:DI}.
  
  Moreover, by a direct computation,
  \begin{align*}
    \sum_{k=1}^K u_k = \sum_{k=1}^{\floor{2x}} \lp%
    1-2(k-1)\frac{\floor{2x}-x}{\floor{2x}\floor{2x-1}}\rp =
    x=s-\frac{\lambda}{\mu}.
  \end{align*}
  In particular,  $\mu(s-\sum_{k=1}^Ku_k)=\lambda$. 
  Plugging it in Equation~\eqref{eq:DI},
  \begin{itemize}
  \item for $k=1$,  $\lambda(u_{2}-u_1)
    \mu(s-\sum_{k=1}^Ku_k)\lambda (u_{0}-u_1) +
    \gamma=\lambda(u_2-u_1)+\gamma=0$.
  \item for $k\in\{2\dots\floor{2s}{-}1\}$, $\lambda(u_{k+1}-u_k) +
    \mu(s-\sum_{k=1}^Ku_k)(u_{k-1}-u_k)=\lambda \gamma-\lambda\gamma=0$.
  \item for $k=\floor{2s}$, $\lambda(u_{\floor{2s}+1}-u_{\floor{2s}}) +
    \mu(s-\sum_{k=1}^Ku_k)(u_{\floor{2s}-1}-u_{\floor{2s}})+\lambda u_{\floor{2s}}-\gamma=0$
  \item for $k=\floor{2s}+1$,
    $\lambda(u_{\floor{2s}+2}{-}u_{\floor{2s}+1}) +
    \mu(s{-}\sum_{k=1}^Ku_k)(u_{\floor{2s}}-u_{\floor{2s}+1})+\lambda
    -\gamma u_{\floor{2s}}=0$
  \end{itemize}
  This proves that $u$ is a fixed point of the differential inclusion
  \eqref{eq:DI}. Using monotonicity arguments  as in
  Theorem~\ref{th:two-choice}, we can show that this fixed point is
  unique. However, the proof is quite technical, hence omitted. As the
  proportion of problematic stations is zero for $u$, this concludes
  the proof of the theorem.

  We now consider the case $\rho<1$ (which corresponds to
  $s<K/2+\lambda/\mu$ and $\gamma<\gamma^*$). Define
  $z(\rho):=(1-\rho+(\rho^{K}-\rho)\gamma)/(1-\rho^{K+1})$ and a
  sequence $x(\rho)=(x_0\dots x_K)$:
  \begin{itemize}
  \item If $\rho^{K-1}(\rho z+\gamma) > \gamma$, then  define
    $x_k(\rho)=\rho^kz(\rho)$ for $k\in\{1\dots K-1\}$ and
    $x_K(\rho)=\rho^K z(\rho)-\gamma$.
  \item Otherwise, let $x_0(\rho)=(1-\rho)$ and
    \begin{align*}
      x_{k} =\left\{\begin{array}{ll}
          \rho^k(1-\rho) &\text{ if $\rho^{k+1}(1-\rho)>\gamma$}\\
          \frac{\rho^k(1-\rho) - \gamma}{1-\rho} &\text{ if $\rho^{k}(1-\rho)>\gamma>\rho^{k+1}(1-\rho)$}\\
          0&\text{otherwise}
        \end{array}\right.
    \end{align*}
  \end{itemize}
  
  It is straightforward to verify that for all $\rho$, $x(\rho)$ is a
  fixed point of the differential equation~\eqref{eq:DI}. Moreover, the
  quantity $\sum_{k}x_k(\rho)$ is an increasing function of $\rho$. This
  shows that the differential equation has a unique fixed point (for
  $\gamma<\gamma^*$).
  \qed
\end{proof}

\subsection{Impact on the Performance}

Theorem~\ref{th:DI} shows that the optimal redistribution rate
decreases with the station capacity. The optimal redistribution rate
is minimal for $s=K/2+\lambda/\mu$. In this case, moving bikes from
saturated to empty stations at rate $\lambda/(K-1)$ suffices to avoid the
existence of problematic stations. When
$x=\min(s-\lambda/\mu,K-s+\lambda/\mu)$ is an integer, the optimal
redistribution rate simplifies in $\gamma^*=\lambda/(2x-1)$.

\begin{figure}[ht]
  \centering
  \begin{tabular}{ccc}
    \subfigure[$s=5, \gamma=0$]{\includegraphics[width=.31\linewidth]{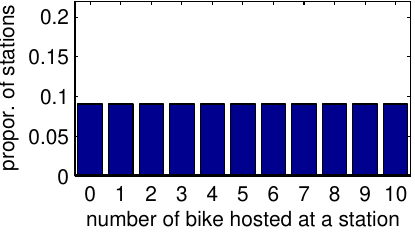}}&
    \subfigure[$s=5, \gamma=\gamma^*_5=1/9$]{\includegraphics[width=.31\linewidth]{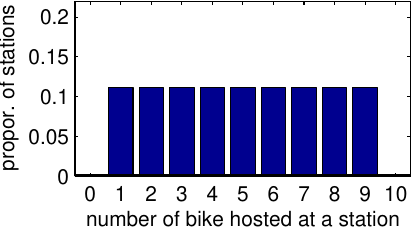}}&
    \subfigure[$s=5, \gamma=1/5$]{\includegraphics[width=.31\linewidth]{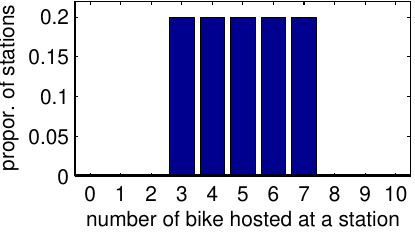}}\\
    \subfigure[$s=7, \gamma=0$]{\includegraphics[width=.31\linewidth]{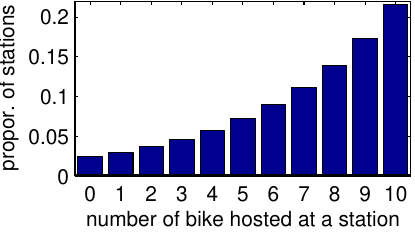}}&
    \subfigure[$s=7, \gamma=1/9$]{\includegraphics[width=.31\linewidth]{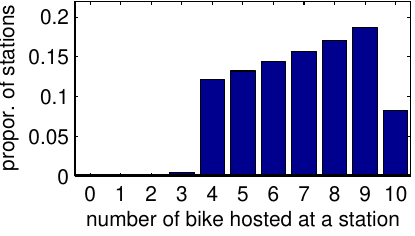}}&
    \subfigure[$s=7, \gamma=\gamma^*_7=1/5$]{\includegraphics[width=.31\linewidth]{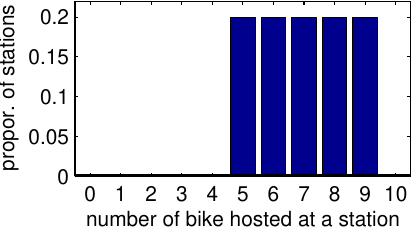}}
  \end{tabular}
  \caption{Illustration of Theorem~\ref{th:DI}: distribution of
    station occupancy. The station capacity is $K=10$. 
    Two fleet sizes $s=5$ and $s=7$ and three
    redistribution rates $\gamma=0$, $\gamma=1/9$ and $\gamma=1/5$ are compared.}
  \label{fig:occup_redistribution}
\end{figure}

These results are illustrated in
Figure~\ref{fig:occup_redistribution}.  The station capacity is set to
$K=10$ and $\mu=+\infty$ (the trip time is negligible). The proportion
of stations that have $x$ bikes available is plotted as a function of
$x$.  Two fleet sizes $s=K/2=5$ and $s=7$ are compared, according to
various values of $\gamma$: $\gamma=0$, $\gamma=\gamma^*_5=1/9$ (the
optimal redistribution rate for $s=K/2$) and $\gamma=\gamma^*_7=1/5$
(the optimal redistribution rate for $s=7$).

In both cases, the occupancy distribution concentrates around $s$ as
$\gamma$ increases.  When $s=5$, the occupancy distribution is uniform
for $\gamma=0$. As expected, there is no problematic station when
$\gamma\ge\gamma^*_5=1/9$.  When $s=7$, the occupancy distribution is
geometric for $\gamma=0$ (see Section~\ref{sec:basic-model}). For
$\gamma=1/9<\gamma^*_7$, there is no empty station and the occupancy
distribution is  truncated geometric. When $\gamma=\gamma^*_7$, the
occupancy distribution is uniform on $\{5\dots9\}$ and there is no
problematic station.


\section{Validation of the Model: Extensions and Simulations}
\label{sec:validation}

\subsection{Time- and Space-Inhomogeneous Systems}

In this paper, we focus on homogeneous bike-sharing systems. This
means that the travel demand is constant with time and that this time
is the same for any pair of origin and destination.  The model
presented in Section~\ref{sec:basic-model} captures the main features
of these systems, \emph{i.e.}, loss of the arriving users, search when
returning a bike. As we consider a homogeneous model, its performance
is naturally described in terms of the proportion of problematic
stations in steady-state.

A natural extension of these results is to consider
space-inhomogeneity and time-inhomogeneity. Space-inhomogeneity often
occurs in cities where some stations are geographically higher or
lower than others, thus creating a flow from one region of the city to
another. Our basic model can be directly extended to this case, for
example, by considering clusters of stations that have a similar level
of popularity. The steady-state behavior of such a model is exposed by
\cite{velib-aofa}. This enables us to compute the fleet size that is
optimal in terms of minimizing the proportion of problematic stations
in a given cluster. Because of working hours or week-ends,
bike-sharing systems are often time-inhomogeneous. Modeling these
phenomena can be done by considering tides of people that go from
housing to working areas in the morning and come back in the evening,
as in \cite{waserhole2013pricing}. In this case, the proportion of

problematic stations does not reflect the performance of the system
and the definition of a performance metric is not clear and might
depend on the situation. Characterizing and understanding such systems
is an issue beyond the scope of this paper, and we plan to tackle it
in future work.

\subsection{Distribution of Trip Times}
\subsubsection{One-Choice Model: Insensibility to the Distribution of Trip Times}

In order to obtain a tractable model, we choose the inter-arrival
times of users and trip times to be exponentially distributed. This
leads to a Markovian model that has a compact representation.  We
investigate the influence of more realistic distributions by using
simulation. Our preliminary simulation results indicate that if the
average trip time has an influence on the performance, the actual
distribution has little effect.  The behavior of a system with general
trip-time distribution is very similar to a system where trip times
have an exponential distribution with the same mean.

To verify this assumption, we compare four possible trip-time
distributions:
\begin{enumerate}
\item Original model -- The trip times are exponentially distributed
  of mean $1/\mu$.
\item Deterministic -- The trip times are all equal to $1/\mu$.
\item Log-normal -- The trip times follow a log-normal distribution
  of mean and standard deviation $1/\mu$.
\item Uniform -- The trip times are uniformly distributed on $[0;
  2/\mu]$.
\end{enumerate}
In all cases, the average trip time is $1/\mu$. We simulate the four
cases on a system composed of $N=100$ stations. Apart from the trip
distribution, the model is exactly the same as in Section~\ref{sec:hm}
(users arrive according to a Poisson process and there is no
geometry).  The results are reported in
Figure~\ref{fig:trip_distribution}.

\begin{figure}[ht]
  \centering
  \begin{tabular}{@{}cc@{}}
    \subfigure[\label{fig:lambdamu1} $\lambda/\mu=1$]{\includegraphics[width=0.46\linewidth]{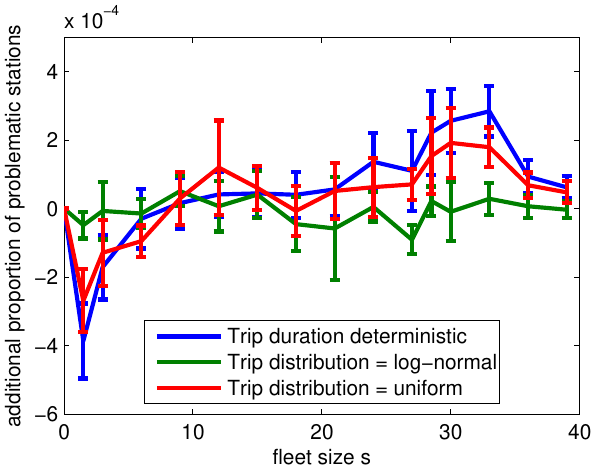}}&
    \subfigure[\label{fig:lambdamu10} $\lambda/\mu=10$]{\includegraphics[width=0.46\linewidth]{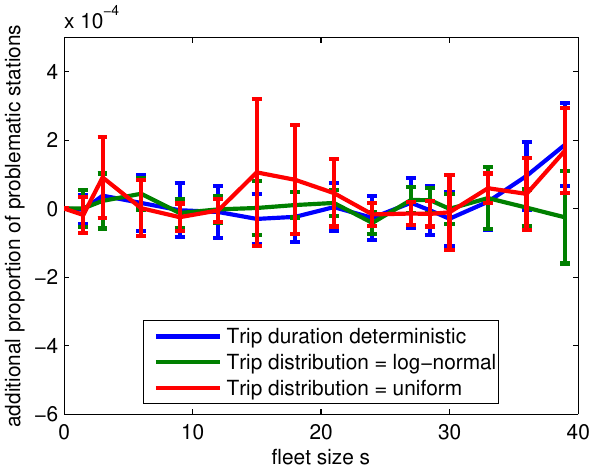}}
\end{tabular}
\caption{Difference between the proportion of problematic stations when
  the trip follows another distribution minus when the trip follows an
  exponential distribution.  The vertical bars indicate the 95\%
  confidence interval for the mean.   }
  \label{fig:trip_distribution}
\end{figure}


For each distribution, we plot the proportion of problematic stations
for this distribution minus the proportion of problematic stations for
the original model.  The results are shown as a function of the fleet
size $s$ for two situations: $\lambda=\mu$
(Figure~\ref{fig:lambdamu10}) and $\lambda=10\mu$
(Figure~\ref{fig:lambdamu10}).  The results reported in
Figure~\ref{fig:trip_distribution} are the average over 100
simulations. The errorbars are the 95\% confidence interval. We
observe that, for a fixed value of $\lambda/\mu$, the average
difference is always small and is almost smaller than the confidence
interval. However, we recall that the average trip duration $1/\mu$
does have an influence (see Figure~\ref{fig:performance_onecluster}
and Figure~\ref{fig:performance_s}).

We conclude that the trip distribution has a negligible effect on the
performance.  The performance of the system depends only on the
average trip time $1/\mu$. Note that we also simulate a case where
the trip time is exponentially distributed but when the time between
two arrivals of customers follows one of the four distributions. The
results, not reported here, show that the inter-arrival time
distribution also has a negligible effect on the proportion of
problematic stations.

\subsubsection{Two-Choice Model with Delays: Larger Trip Times Degrades the Performance}

In this section, we study by simulation the effect of the average trip
time on the performance of the two-choice policy with another
two-choice model described as follows. Note that such a model seems
analytically much more difficult.  It is an homogeneous model with
$N=100$ stations and $s=K/2+\lambda/\mu$ bikes per station. Users
arrive at rate $\lambda=1$ in each station and take a bike if the
station is not empty. When a user takes a bike, she picks two stations
at random and chooses the one that has the smallest number of bikes
availables as her destination. She arrives at her chosen destination
after a trip time with exponential distribution. If her destination
has no available spot, she tries another station at random. This model
has two differences with the two-choice model of
Section~\ref{sec:incentives}. First, we now assume that the choice of
the least loaded station is performed before the trip of the
users. Furthermore, in the new model, a user who cannot return her
bike picks a station at random and not the least loaded among two.

\begin{figure}[ht]
  \centering
  \includegraphics[width=\linewidth]{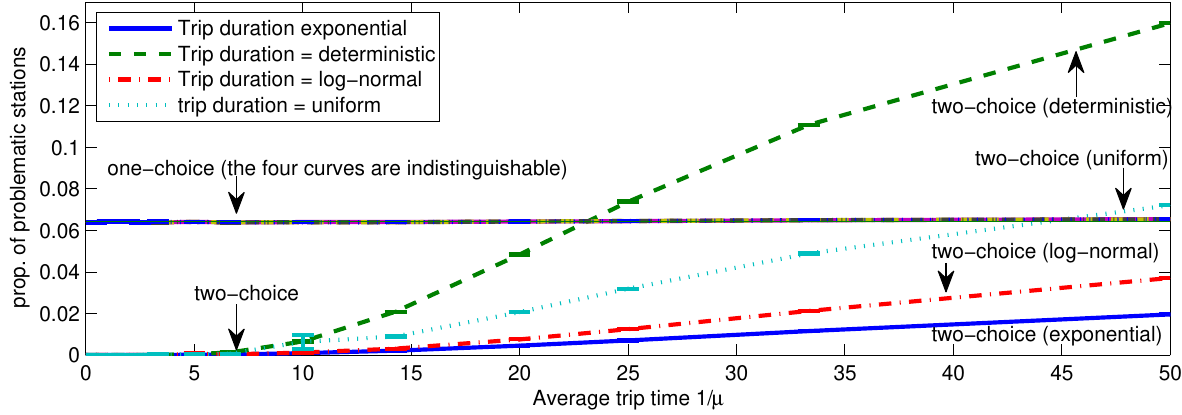}
  \caption{Two-choice model where the users choose their destination before riding their bike: 
    proportion of problematic stations as a
    function of the average trip duration.  The performance degrades as
    the average trip time increases.  We compare the four distribution
    of trip times.  For the one-choice model, the performance does not
    depend on the trip-time distribution.}
  \label{fig:trip_distribution_2choice}
\end{figure}

We simulate this model for four trip-time distributions (exponential,
log-normal, uniform and deterministic).  We plot in
Figure~\ref{fig:trip_distribution_2choice} the proportion of
problematic stations as a function of the average trip time,
$1/\mu$. In each case, the capacity of station is $K=30$ and the fleet
size is equal to $s=K/2+\lambda/\mu$.  We observe that, as expected,
the performance for the two-choice model degrades as the trip time
increases. This is explained by the delay in the information: when the
trip time is large, a station that had few bikes available when the
user departed does not necessarily still have few bikes when the user
arrives.
 The performance of the
original model is not affected by the variation of the average trip time, because the
fleet size is equal to $s=K/2+\lambda/\mu$.

We compare the two models (one-choice and two-choice).  For a very
large average trip time, the performance when the two-choice rule is
applied can even be worse than when it is not. Although
counterintuitive, this phenomena can be explained by the delays: for
example, let us consider that all trips have a duration of
$10$min. Then, a station that has few bikes available at time $t$ will
have many users who decide to ride to it between time $t$ and time
$t+10$min. These users will not arrive before time $t+10$min. As a
consequence, it is likely this station will experience a burst of
arrivals after time $t+10$min, causing more problems than if the
choices were made at random.  As shown in
Figure~\ref{fig:trip_distribution_2choice}, this phenomena is
exacerbated when the trip-time distribution is more concentrated (the
proportion for problematic stations is increasing when going for
deterministic to uniform to log-normal to exponential distribution).

We want to emphasize that this problem occurs only when the trip time
is very large compared to the arrival rate (about $50$ when the
distribution is log-normal).  Hence, we conjecture that the problem
would not occur in a realistic scenario. We plan to study this
question in a future work.

\subsection{Influence of Geometry on the Performance Metric}
\label{sec:geometry}



Our theoretical results and closed-form formula strongly rely on the
assumption that the system has no geometry.  In real-world systems,
finding or returning a bike can induce a local search for an available
bike or an available spot.  Studying  such systems analytically is out of
reach. This section presents simulation results that show that the
influence of geometry on the proportion of problematic station is
limited, both for the basic model and the two-choice model. However,
we show in Section~\ref{sec:other_perf} that it has an effect when
other metrics are considered, such as the time to return a bike. 


{\bf The model.} We consider two representations of geometry
represented in Figure~\ref{fig:geom_1choice} and
\ref{fig:geom_2choice}: a 2D grid and a single line.  The 2D grid is a
schematic representation of a homogeneous city center like that of
Manhattan. This situation aims at being a good representation of many
bike-sharing systems: the stations are placed quite evenly on a plane
and each station has a few neighbors (here, four) spread around
it. The 1D line is a more extreme case that corresponds to a city
spread along a single road. We expect the imbalances due to random
choices to have more effect in the 1D line than in the 2D grid.

\begin{figure}[ht]
  \centering
  \begin{tabular}{ccc}
    \begin{tabular}{c}
      \subfigure[\label{fig:2D_onechoice}$2D$ grid of $N$ stations.]{
        \begin{tikzpicture}
          \draw[step=.5,black] (.5,.5) grid (3,3); \draw (1.25,0.75)
          edge[bend left,->] (1.75,2.25); %
          \draw[dashed] (1.75,2.25) edge[->] (1.25,2.25); %
          \draw[dashed] (1.75,2.25) edge[->] (1.75,1.75); %
          \draw[dashed] (1.75,2.25) edge[->] (1.75,2.75); %
          \draw[dashed] (1.75,2.25) edge[->] (1.75,2.75); %
          \draw[dashed] (1.75,2.25) edge[->] (2.25,2.25); %
          \node at (2.2,2.4) {\tiny local search if saturated};
        \end{tikzpicture}
      }
      \\[-5pt]
      \subfigure[\label{fig:1D_onechoice}Line of $N$ stations. ]{
        \begin{tikzpicture}
          \draw[step=.5,black] (0,0) grid (3,.5); \draw (.75,0.5)
          edge[bend left,->] (2.25,.4);

          \draw (2.25,.4) edge[dashed,->] (2.75,.3); %
          \draw (2.25,.4) edge[dashed,->] (1.75,.3); %
          \node at (2,.2) {\tiny local search if saturated};
        \end{tikzpicture}
      }
    \end{tabular}
    &
    \begin{tabular}{c}
        \includegraphics[width=.47\linewidth]{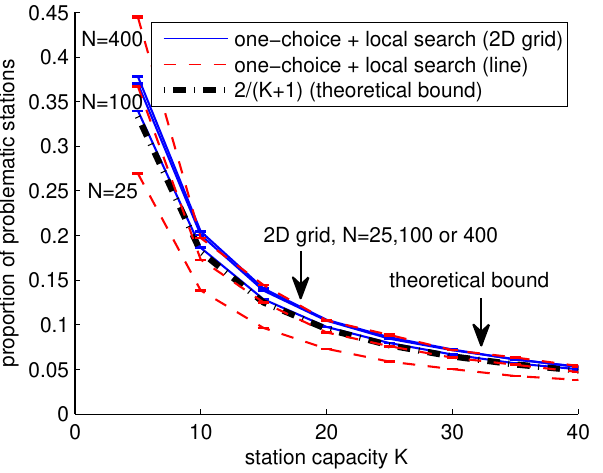}
    \end{tabular}
  \end{tabular}
  \caption{Influence of the local search for the one-choice
    model. Proportion of problematic stations as a function of the
    station's capacity $K$ for two geometric models: 2D grid and the
    line. The parameters are $s=K/2$ and $N=25$, $10$,
    $400$.}
  \label{fig:geom_1choice}
\end{figure}

We simulate the two models. Users arrive at each station with rate
$\lambda$. If the station is empty, the user leaves the system. Otherwise,
she takes a bike. In the one-choice case (see
Figure~\ref{fig:geom_1choice}), she chooses a destination at random. If
this station is saturated, she performs a random walk on the neighbors of
the destination until she finds a non-saturated station. In the
two-choice case (see Figure~\ref{fig:geom_2choice}), the user chooses the
least loaded station among two neighbors. Again, the user performs a
local search if the station is saturated.

The proportion of problematic stations for all cases is reported in
Figure~\ref{fig:geom_1choice} and Figure~\ref{fig:geom_2choice} for
$\mu=+\infty$. In both cases, the models were simulated with $N=25$,
$N=100$ and $N=400$ stations. Each point represents the average over
$20$ independent simulations. The error bars indicate $95\%$
confidence intervals but are most of the time too small to be seen. We
compare these values with the theoretical bounds of the models without
geometry: $2/(K+1)$ for the one-choice model
(Figure~\ref{fig:geom_1choice}) and $\sqrt{K}2^{-K/2}$ for the
two-choice model (Figure~\ref{fig:geom_2choice}).  In all cases, the
performance exhibits the same trend in the models with geometry as the
theoretical bounds. In particular, the performance obtained for the 2D
grid are mostly independent of $N$ and are very similar to the
theoretical bounds. This shows that the bounds obtained in
Theorems~\ref{th:one-choice} and \ref{th:two-choice} are
representative of more realistic systems, even if they are obtained on
models that do not take into account the geometry.



\begin{figure}[ht]
  \centering
  \begin{tabular}{cc}
    \begin{tabular}{c}
      \subfigure[\label{fig:2D_twochoice}Two-choice model on a $2D$ grid of $N$ stations.]{
        ~~\begin{tikzpicture}
          \draw[step=.5,black] (.5,.5) grid (3,3); \draw (.75,0.75) edge[bend
          left] (1.1,2.5); \draw (1.1,2.5) edge[->] (1.25,2.75); \draw
          (1.1,2.5) edge[->] (1.25,2.25); \node at (2.2,2.6) {\tiny least
            loaded of}; \node at (2.2,2.4) {\tiny two neighbors};
        \end{tikzpicture}
        ~~
      }
      \\[-8pt]
      \subfigure[\label{fig:1D_twochoice}Two-choice model on a line.]{
        ~~~
        \begin{tikzpicture}
          \draw[step=.5,black] (0,0) grid (3,.5); \draw (.75,0.5) edge[bend
          left] (2.5,.75); \draw (2.5,.75) edge[->] (2.75,.4); \draw
          (2.5,.75) edge[->] (2.25,.4); \node at (2.3,1.) {\tiny least
            loaded of}; \node at (2.3,0.8) {\tiny two neighbors};
        \end{tikzpicture}
      ~~~}
    \end{tabular}
    &
    \begin{tabular}{c}
      \includegraphics[width=.47\linewidth]{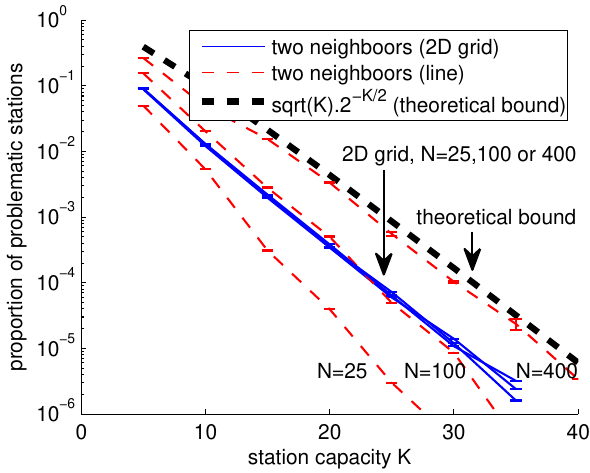}
    \end{tabular}
  \end{tabular}
  \caption{Influence of geometry on the two-choice model. Proportion
    of problematic stations as a function of the station's capacity
    $K$ for the 2D grid and the local search. }
  \label{fig:geom_2choice}
\end{figure}

\subsubsection{Average Number of Visited Stations}
\label{sec:other_perf}

We now consider a different performance indicator that is the
average number of stations that have to be visited before a bike can
be returned. This metric is an indicator of how much time is necessary
to return a bike. It is critical for users. 

In the original model without geometry, this metric can be easily
computed as a function of the proportion of saturated stations. Let
$p$ be this proportion. With probability $1-p$, the original
destination is not saturated and the user visit only one
station. Otherwise (with probability $1-p$) , the user chooses another
destination at random and repeats this operation until she finds a
non-saturated station. As the new destination is chosen at random, it
also has a probability $p$ to be saturated. Hence, the number of
visited stations before returning a bike is, on average,
$(1-p)\sum_{i=0}^\infty(1+i)p^i=1/(1-p)$.

In our models with geometry (line or 2D grid), we consider that users
perform a local search to return their bikes: if a destination is
saturated, the user chooses one of the two (or four) neighbors and
repeats the operation until she finds a available spot. This implies
that the neighbors of a saturated station are likely to receive more
bikes than stations not located nearby. Hence, the neighbors of a
saturated station are more likely to be saturated than others. This
creates local saturation and increase the average number of station
that a user has to visit before being able to return her bike.


\begin{figure}[ht]
  \centering 
  \begin{tabular}{@{}c@{~~}c@{}}
    \subfigure[\label{fig:geometry_s}Average number of
    visited stations before returning as a function of the fleet size $s$.]{%
      \includegraphics[width=0.48\linewidth]{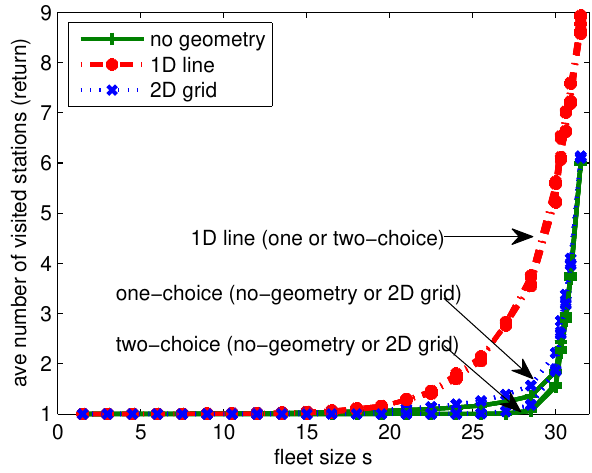}}&
    \subfigure[\label{fig:geometry_full}Average number of visited
    stations as a function of the proportion of saturated stations.]{%
      ~\includegraphics[width=0.48\linewidth]{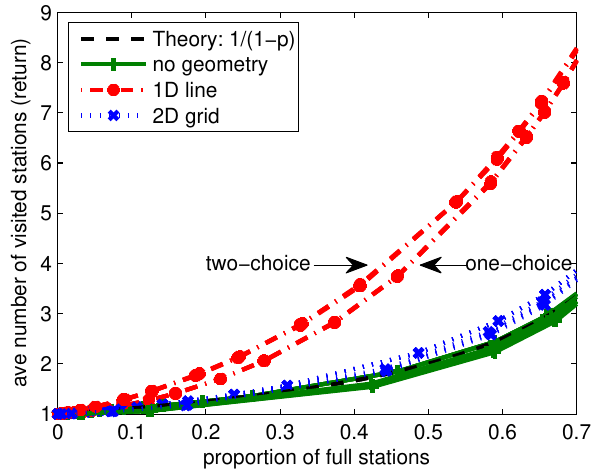}~~}
  \end{tabular}

  \caption{Average number of stations that have to be visited before
    returning a bike. We compare the geometric and non-geometric
    models for different fleet sizes and incentive strategies. Because
    of local search, the average number of visited stations grows much
    faster in the case of a line than for the 2D grid. The 2D
    grid case is similar to the model without geometry. }
  \label{fig:geometry2}
\end{figure}

We simulate the three models and compute the average number of
stations that a user visits before finding an available spot. The
results are reported in Figure~\ref{fig:geometry2}.  In
Figure~\ref{fig:geometry_s}, we plot the average number of visited
stations as a function of the fleet size $s$. This figure is to be
compared with Figure~\ref{fig:performance_onecluster} and
Figure~\ref{fig:performance_s} for the one-choice case and with
Figure~\ref{fig:power2} for the two-choice case. The capacity of the
station is $K=30$. When the number of bikes per station is low (fewer
than $15$), there are almost no saturated stations. In this case,
users successfully return their bike at their chosen destinations,
most of the time. When the number of bikes is close to $30$, we
observe that the average number of visited stations rises quickly in
the case of the line.  In all cases, the results are plotted for the
four trip-time distributions, but the curves are indistinguishable.
For the 2D model, the average number of visited stations is slightly
higher that than the non-geometrical case but remains similar.

To ease the comparison between the two metrics, we plot the average
number of visited station as a function of the proportion of saturated
station in Figure~\ref{fig:geometry_full}. As indicated before, when
the new destination is chosen at random, the average number of visited
stations is $1/(1-p)$; where $p$ is the proportion of saturated
stations. We observe in Figure~\ref{fig:geometry_full} that the 2D
grid has a similar behavior as the non-geometric model.  When the
geometry is represented by a line, the number of visited stations is
higher but it has the same order of magnitude. In particular, when
$30$ percent or less of the stations are saturated, the average number
of visited stations is lower than $2.5$, and lower than twice the one
without geometry. We remark that having $30$ percent of the stations
that are saturated corresponds to an extreme situation\footnote{In our
  simulation, each station can host up to 30 bikes. Having more than
  $30$ percent of stations that are saturated occurs when there is a
  fleet of more than 25 bikes per station for the line and more than
  30 bikes per stations for the 2D grid.}.

To conclude, our simulations show that a 2D grid exhibits a
performance similarly to the theory. As the positioning of stations is
similar to a 2D grid in many bike-sharing systems, this implies that
the basic model reflects the behavior of realistic scenarios.

\subsection{Adding Features to the Model (Search at Arrival, Shorter Search Time when Returning)}
\label{sec:local_search}

The features of the homogeneous model can also be changed, while
keeping the model analytically
tractable.  
For example, instead of leaving, a user who arrives at an empty
station can visit randomly other stations to find a bike, consisting
of attempts with exponential distribution with parameter
$\lambda'$. The loss of users or the avoidance of empty stations can
be seen as the two extreme cases $\lambda'=0$ and $\lambda'=+\infty$.
A similar modification can be done when returning a bike.  The mean
searching time $1/\mu'$ for finding or returning a bike can also be
different for the mean trip time $1/\mu$.

These modifications do not change the nature of the model: the
occupancy of station will still follow a geometric distribution. Only
the influence of the fleet size $s$ will change.  For example, if the
reattempt times of a user who cannot return her bike at the first
attempt, are exponentially distributed of parameter $\mu'$, then
Equation~\eqref{fixep} is replaced\footnote{ To see that, the
  proportion $s$ of bikes per station is the sum of two terms: the
  mean number of bikes per station $\sum_{k=0}^Kk\nu_\rho(k)$ and the
  mean number of users per station riding. This term is the product of
  the effective arrival rate $\lambda(1-\nu_\rho(0))$ times the mean
  riding time. The mean riding time is
  $1/\mu+(1-\nu_\rho(K))\sum_{k=0}^{+\infty} \nu_\rho(K)^k k/\mu'$
  because it is the sum of the mean trip time $1/\mu$ and $k/\mu'$ if
  the user returns at the $(k+1)$-th attempt, i.e. with probability
  $(1-\nu_\rho(K))\nu_\rho(K)^k$. It leads to the result.

 
} by
\begin{equation}
  \label{eq:mu'}
  s=\frac{\lambda}{\mu}\rho\Big[1+\nu_\rho(K)\Big(\frac{\mu}{\mu'}-1\Big)\Big] + \sum_{k=0}^Kk\nu_\rho(k).
\end{equation} 

\begin{figure}[ht]
  \centering
  \begin{tabular}{@{}cc@{}}
    \subfigure[\label{fig:performance_s1}$\lambda/\mu=1$]{%
      \includegraphics[width=.48\linewidth]{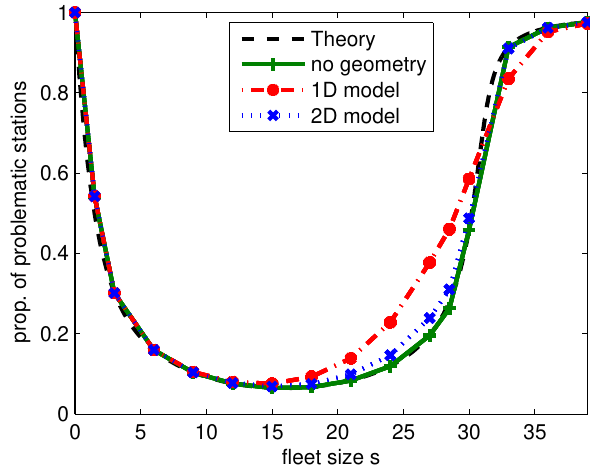}}&
    \subfigure[\label{fig:performance_s10}$\lambda/\mu=10$]{%
      \includegraphics[width=.48\linewidth]{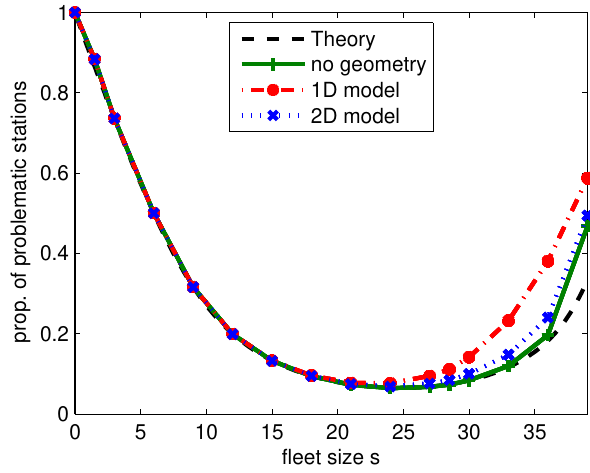}}
  \end{tabular}
  \caption{Proportion of problematic stations as a function of the
    fleet size $s$ when the search around a saturated destination are $5$
    times faster that trip time ($\mu'=5\mu$). The results are ploted
    for the four trip-time distributions but the curves are
    indistinguishable (they are within $0.15\%$). }
  \label{fig:performance_s}
\end{figure}

Moreover we simulate this model and report the proportion of problematic
stations as a function of the fleet size per station $s$ in
Figure~\ref{fig:performance_s}. We plot the results for
$\lambda/\mu=1$ (Figure~\ref{fig:performance_s1}) and $\lambda/\mu=10$
(Figure~\ref{fig:performance_s10}). In each case, we compare the
theoretical prediction of Equation~\eqref{eq:mu'} with a simulation
for $N=100$ and the two  models with geometry presented in
Section~\ref{sec:geometry}.  We set $\mu'=5\mu$ and we vary and
compare the four trip-time distribution (exponential, deterministic,
log-normal and uniform).  In all cases, the trip-time distribution has
a negligible effect on the performance: for a given configuration, the
relative difference between the proportions of problematic stations is at most $0.15\%$. 
The geometry does have an effect, but the overall
behavior is similar. As mentionned in Section~\ref{sec:geometry}, the
performance of the 2D model is similar to the performance of a model
without geometry.

\subsection{Influence of Truck Capacity}
\label{sec:truck_capacity}

In the redistribution model presented in Section~\ref{sec:regulation},
we assume that bikes are moved individually by a truck. This leads to
a simple formula for the optimal redistribution rate. In practice,
however, bikes are moved by trucks that can contain a few tens of
bikes.  This section reports simulation results of the model described
in Section~\ref{sec:geometry} where a truck of capacity $C$ is
added. To obtain a fair comparison, the rate at which the truck visits
the stations is set inversely proportional to the truck
capacity. Hence, with this scaling, having a larger truck capacity
leads to a poorer performance: the balance achieved when bikes are
moved one by one at rate $10$ is better than when bikes are moved ten
by ten at rate $1$.

The simulated model is composed of $N$ stations that are placed in a
2D grid, as in Figure~\ref{fig:2D_onechoice}. Users move as in the
one-choice model of Section~\ref{sec:geometry}: At each time step, a
user arrives at one station picked at random, takes a bike if this
station is not empty, and performs a local search if the targeted
destination is saturated.  At each time step, with probability $\gamma/C$,
a truck transports bikes from the station that has the largest number
of bikes and places them in the station that has the smallest number
of bikes.  The truck tries to equalize the number of bikes between the
two stations but cannot move more than $C$ bikes at a time. The case
$C=1$ corresponds to the model described before. The maximum number of
bikes per time slot that can be moved by truck does not depend on $C$
and is equal to $C(\gamma/C)=\gamma$.

\begin{figure}[ht]
  \centering
  \begin{tabular}{cc}
    \subfigure[$K=15$]{\includegraphics[width=.45\linewidth]{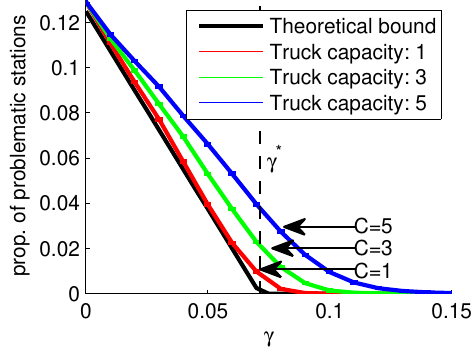}}&
    \subfigure[$K=30$]{\includegraphics[width=.45\linewidth]{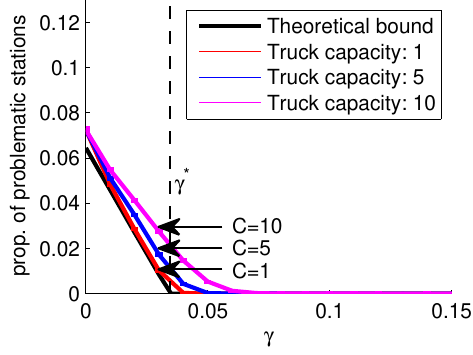}}
  \end{tabular}\vspace{-.2cm}
  \caption{Proportion of problematic stations as a function of the
    redistribution rate $\gamma/\lambda$. We compare the theoretical
    analysis with simulations for various truck  capacities.  The
    theoretically optimal redistribution rate is $\gamma^*=1/(K-1)$.}
  \label{fig:perf_regulation}\vspace{-.3cm}
\end{figure}

The proportion of problematic stations are reported in
Figure~\ref{fig:perf_regulation} for two stations capacity: $K=15$ and
$K=30$. Recall $s=K/2$. For $K=15$, simulation results for $C=1$,
$C=3$ and $C=5$ are compared with the theoretical values obtained from
the fixed point analysis. For $K=30$, simulation results for $C=1$,
$C=5$ and $C=10$ are compared with the theoretical values.  The
vertical lines represent the optimal redistribution rate $\gamma^*$,
obtained from Theorem~\ref{th:DI}.  We observe that the theoretical
model (with a truck capacity of one) predicts qualitatively the
performance of the simulated models. This prediction is an optimistic
estimation of the simulated values. Moreover, as expected, the
performance decreases with the truck capacity.

To conclude, in an homogeneous system, the optimal redistribution rate
depends on the capacity of the station and leads to a great
improvement of the system performance. It can be shown that 
combining this redistribution mechanism with the two-choice incentives
introduced in the previous section leads to an optimal redistribution
rate close to $O(\sqrt{K}2^{-K/2})$.



\section{Conclusion and Future Work}
\label{sec:conclusion}

In this paper, we investigate the influence of the station capacities
on the performance of homogeneous bike-sharing systems.  Using a
stochastic model and a fluid approximation, we provide analytical
expressions for the performance.  They are summarized in
Table~\ref{tab:capacity}. The optimal fleet size is approximately
$K/2$ for all models.  Without using incentives, the capacity has only
a linear effect on the performance or on the optimal redistribution
rate.  For this purpose, an incentive to return bikes to the least
loaded station among two improves dramatically the performance, even
if a small proportion of users accept to do this.  Moreover, even if
this model does not take into account any geographic aspect of the
system, simulations show that these results also hold when considering
simple geometric models with local interactions.

\begin{table}[htb]
  \centering
  \begin{tabular}{|c|c|c|}
    \hline
    &\begin{tabular}{c}
      Minimal proportion\\of problematic stations
      \end{tabular}
      &Optimal fleet size $s$\\\hline
    Original model & $2/(K+1)$ &$s=K/2+\lambda/\mu$ \\\hline
    Two-choice & $\sqrt{K}2^{-K/2}$ & $s-\lambda/\mu\in[K/2;K-\log_2(K)]$\\\hline
    Regulation& 0 if $\gamma\ge\lambda/(K-1)$ & $s=K/2+\lambda/\mu$%
    \\\hline
  \end{tabular}
  \caption{Summary of the main results: influence of the station capacity $K$ on the proportion of problematic stations.  }\vspace{-.2cm}
  \label{tab:capacity}
\end{table}

Our results prove that the effect of random choices on the performance
should not be neglected when studying the performance of a
bike-sharing system. Even in a completely balanced system, they
dramatically affect the performance.  A natural extension of this work
is to consider stations with different parameters.  The steady-state
performance of such a system is given by \cite{velib-aofa}. It proves
that, without repositioning via incentives or trucks, performance is
very poor. One interesting question is whether the steady-state
performance can be used as a metric in a system with varying operation
conditions, such as peak-hours and non-peak hours.  Our work can serve
as a building block for studying the effect of incentives and
redistribution mechanisms. Studying practical implementations of these
mechanisms in real-world systems is postponed for future work.
Moreover, the transient behavior of such mechanisms in a city where
the attractiveness of stations varies over time could be studied.

\bibliographystyle{abbrv}
\bibliography{ref}

\ifRapportRecherche

\fi

\end{document}